\begin{document}

\newtheorem{theorem}{Theorem}

\newenvironment{denselist}{
    \begin{list}{\tiny{$\bullet$}}%
    {\setlength{\itemsep}{0ex} \setlength{\topsep}{0ex}
    \setlength{\parsep}{0pt} \setlength{\itemindent}{0pt}
    \setlength{\leftmargin}{1.5em}
    \setlength{\partopsep}{0pt}}}%
    {\end{list}}

\newcommand{\squishlist}{
   \begin{list}{$\bullet$}
    { \setlength{\itemsep}{0pt}
      \setlength{\parsep}{2pt}
      \setlength{\topsep}{2pt}
      \setlength{\partopsep}{0pt}
    }
}
\newcommand{\squishend}{\end{list}}
\newcommand{\eat}[1]{}
\newcommand{\papertext}[1]{}
\newcommand{\techreporttext}[1]{#1}
\newcommand{\agp}[1]{\noindent{\textcolor{red}{Aditya: #1}}}
\newcommand{\revision}[1]{\textcolor{red}{#1}}
\newcommand{\yihan}[1]{\noindent{\textcolor{red}{Yihan: #1}}}
\newcommand{\stitle}[1]{\vspace{0.5em}\noindent\textbf{#1}}

\renewcommand{\baselinestretch}{0.97}

%

\title{Finish Them!: Pricing Algorithms for Human Computation}
%
%
%
%
%

\numberofauthors{2} 
%
\author{
%
%
\alignauthor
Yihan Gao\\
       \affaddr{University of Illinois (UIUC)}\\
       \affaddr{Urbana, Illinois}\\
       \email{ygao34@illinois.edu}
\alignauthor
Aditya Parameswaran\\
       \affaddr{University of Illinois (UIUC)}\\
        \affaddr{Urbana, Illinois}\\
       \email{adityagp@illinois.edu}
}

\maketitle
\begin{abstract}
Given a batch of human computation tasks, a commonly ignored aspect is how the price (i.e., the reward paid to human workers) of these tasks must be set or varied in order to meet latency or cost constraints. Often, the price is set up-front and not modified, leading to either a much higher monetary cost than needed (if the price is set too high), or to a much larger latency than expected (if the price is set too low).
Leveraging a pricing model from prior work, we develop algorithms to optimally set and then vary price over time in order to meet a (a) user-specified deadline while minimizing total monetary cost  (b) user-specified monetary budget constraint while minimizing total elapsed time. We leverage techniques from decision theory (specifically, Markov Decision Processes) for both these problems, and demonstrate that our techniques lead to upto 30\% reduction in cost over schemes proposed in prior work. Furthermore, we develop techniques to speed-up the computation, enabling users to leverage the price setting algorithms on-the-fly. 
\end{abstract}





\section{Introduction}
Crowdsourcing is often used to process and reason about unstructured data such as images, videos, and text. The data thus generated is typically used as training data for machine learning algorithms in applications such as content moderation (i.e., determining if images are suitable to be viewed by a general audience), spam detection, search relevance estimation, information extraction, and entity resolution. In fact, all of the following companies employ crowdsourcing frequently at a large scale to repeatedly process unstructured data: Google~\cite{samasource-crowd}, Ebay~\cite{ebay-crowd}, Microsoft~\cite{crowd-art}, LinkedIn~\cite{goog-fb-linkedin}, Facebook~\cite{goog-fb-linkedin}, Yahoo!~\cite{psox}, Twitter~\cite{raven}, Cisco~\cite{samasource-crowd}, and Yelp~\cite{yelp-crowd}.

Even though crowdsourcing is often used in industry and academia, and has been the subject of many academic papers studying tradeoffs between cost, latency and accuracy~\cite{crowddb,deco,hqueryCIDR,crowdscreen,karger}, there is little to no work on {\em task pricing} and its impact on overall cost and latency: that is, how the price of tasks (i.e., the monetary reward paid to workers on completion) must be set or varied in order to meet cost or latency constraints. Often, the price is set up-front and not modified, leading to either a much higher monetary cost than needed (if the price is set too high), or to a much larger latency than expected (if the price is set too low). As a result, anecdotally, pricing is seen as somewhat of a ``dark art''.

In this paper, we wish to address the following question: {\em Given that we have $n$ fixed tasks, how should we vary their price or reward over time so that they get completed by a certain deadline at the least cost possible?} 
Intuitively, it seems that we may want to start with a low price initially, and then increase it gradually as it gets closer to the deadline. However, there has been no work demonstrating that such strategies will indeed yield good results in practice. Furthermore, there are a number of additional complications, even given this very simple scheme: 
\begin{denselist}

\item What should we price tasks initially?

\item How can we adapt our price setting to the rate at which tasks are picked up? What if tasks get picked up very quickly at the initial price; should we lower the price, should we keep it same, or should we increase it? What if the opposite happens --- that is, tasks get picked up very slowly at the initial price?

\item At what time points should we increase the price? Increasing it too frequently may lead to computationally more expensive decision making (as we will see subsequently), but increasing it too infrequently may result in much higher costs.

\item At what granularities do we increase the price, and how much does this affect overall cost? 

\item Should we price all the tasks the same, or should we price tasks differently?

\item What if we had a fixed budget, and instead wanted to reduce total latency. Would similar techniques apply then? Would varying price help at all?

\item How do we ensure that our pricing schemes can be computed within a reasonable time, and how can we speed them up?

\item How are our algorithms impacted by inaccuracies in estimates of the marketplace dynamics?

\end{denselist}
In prior work,  Faridani et al.~\cite{faridani11} develop a model for latency in crowdsourcing applications based on Non-Homogeneous Poisson Processes. They then use this model to describe a simple scheme based on binary search for pricing tasks to complete by a deadline. However, their scheme is {\em not optimal}, that is, it wastes far too much monetary cost. In this paper, we leverage their model and instead focus on the optimization problem of minimizing cost while meeting the deadline with high probability. Overall, our techniques yield rich dividends --- we get up to a 30\% reduction in cost as compared to their scheme on realistic crowdsourcing workloads. This represents a significant reduction in cost especially for users who run large crowdsourcing workloads with strict deadlines. 

In this paper, we develop algorithms for two optimization problems, given a set of tasks: one, minimizing cost while meeting time requirements, and second, minimizing latency while meeting monetary budget requirements. For the first, we develop an algorithm based on decision theory that gives us near-optimal results. For the second, we develop a solution that uses linear programming, that can be shown to be optimal under some assumptions. A crucial concern for us is that the computation is as little as possible, and we propose various speed-up techniques for this purpose. 

The contributions of this paper are as follows:
\begin{denselist}

\item We describe the two problems that we study in this paper formally in Section~\ref{sec_model}.

\item We develop optimized pricing algorithms that meet a fixed time deadline in Section~\ref{sec_fix_deadline}. Since these algorithms could be computationally expensive, we describe techniques to reduce the complexity of these algorithms.

\item We develop optimized pricing algorithms that meet a fixed monetary cost budget in Section~\ref{sec_fixed_budget}.

\item We demonstrate that our pricing algorithms achieve a reduction in cost of up to 30\% over prior work on simulations with real data from a crowdsourcing marketplace, as well as live experiments on the same marketplace in Section~\ref{sec_experiments}. Furthermore, we demonstrate that the algorithms are remarkably robust to errors in the estimates of parameters of the tasks and the marketplace.

\end{denselist}
We cover related work in Section~\ref{sec:related} and conclude in Section~\ref{sec_conclusion}.

\section{Preliminaries}\label{sec_model}

In this section, we describe the basic model that we will leverage to design optimized pricing algorithms.

We operate on a crowdsourcing marketplace, such as Mechanical Turk~\cite{mturk}. In any crowdsourcing marketplace, users (or requesters) post {\em tasks}, often many at a time, and set a monetary {\em price or reward} for them. At any point, there are many tasks on offer in the marketplace. Human {\em workers} arrive at the marketplace at any time, and can leave at any time. When on the marketplace, workers can choose to work on any of the available tasks. They are allowed to work on a single task at a time. Once they complete a task, they will receive the reward or price assigned for the task by the requester.

In a marketplace, the reward of each task is positively correlated with the completion rate: the higher the reward, the shorter the completion time. However, in order to determine the best trade-off between cost and completion time, this relationship must be precisely quantified. For example, we must be able to answer questions like: if we adjust the reward per task from \$$0.25$ to \$$0.3$, how much do we gain in terms of task completion rate? To answer these questions, we need a formal model for reasoning about the crowdsourcing marketplace. 

In previous work, Faridani et al.~\cite{faridani11} studied the problem of modeling crowdsourcing marketplace dynamics; the dynamics is modeled using two independent processes: A \textit{Non-Homogeneous Poisson Process} is used to model the worker arrivals in the market, and a \textit{Discrete Choice Model} is used to model how workers choose between tasks in the marketplace. We adopt the same model in this paper, and focus instead on the optimal pricing problem. To enable this paper to be self-contained, we describe the worker arrival model in Section~\ref{sec_worker_arrival}, and the task choice model in Section~\ref{sec_prob_est}. These mathematical models will be used to define the pricing problem formally in Section~\ref{sec_prob_state}. 

\subsection{Worker Arrival Model}\label{sec_worker_arrival}

Faridani et al.~\cite{faridani11} show that the arrival of workers in a crowdsourcing marketplace follows a \textit{Non-Homogeneous Poisson Process(NHPP)}. Note that the standard Poisson process is commonly used to characterize the counting process of stochastically occurring events. The  Poisson process has a fixed rate $\lambda$. NHPP is a generalization of the Poisson process, with a rate parameter $\lambda(t)$, a function of time~\cite{ross1996stochastic}. In a NHPP, the number of events that occur during any period of time $[S, T]$ follows a Poisson distribution:
\begin{equation}\label{eqn_poisson_proc}
\mathbf{N}[S,T] \sim \mathbf{Pois}(\cdot|\lambda = \int_{t = S}^T \lambda(t))
\end{equation}
where $\mathbf{Pois}(\cdot|\lambda)$ refers to a Poisson distribution with mean $\lambda$.

Estimating the arrival-rate function $\lambda(t)$ of a NHPP is more difficult than that for a Homogeneous Poisson Process because of the infinite dimensionality of the arrival-rate parameter $\lambda(t)$. Therefore, a common approach is to assume a parametric form for $\lambda(t)$. For instance, Massey et al.~\cite{massey1996estimating} used a piece-wise linear function to approximate the traffic of telecommunication systems.

Figure~\ref{fig_mturk-track_hit_completion} depicts the number of tasks completed every $6$ hours for a time range of 4 weeks in Mechanical Turk. The figure depicts that the variation of worker arrivals follows a process that approximately repeats every week. In this paper, we assume that the arrival-rate function $\lambda(t)$ is periodic, and the variations in the number of worker arrivals are all due to the randomness of the Poisson process. Given historical data, the arrival-rate function $\lambda(t)$ can be estimated and used to predict arrival rates in the future. Faridani et al.~\cite{faridani11} provide techniques for learning the $\lambda(t)$ function, and demonstrate the accuracy of these techniques. In this paper, we leverage these techniques, and assume that $\lambda(t)$ is known. As we will see in our experimental results in Section~\ref{sec_experiments}, our pricing strategies are not very sensitive to mistakes in the estimation of $\lambda(t)$.

\begin{figure}[h]
\vspace{-8pt}
\centering
\includegraphics[width = 2.3in]{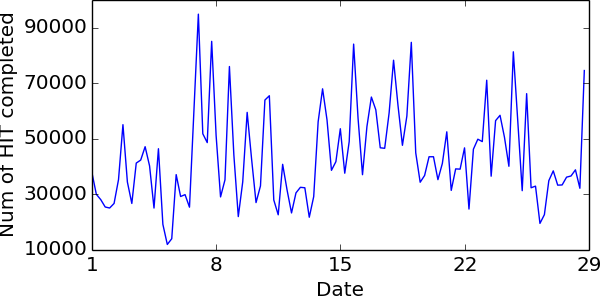}
\vspace{-10pt}
\caption{The number and total value of tasks completed each day from January 1st, 2014 to January 28th, 2014. Retrieved from http://www.mturk-tracker.com}
\label{fig_mturk-track_hit_completion}
\vspace{-10pt}
\end{figure}

Note that the NHPP models the arrival of workers to the entire marketplace, and does not capture whether those workers decide to work on our specific task. An independent Bernoulli process can be used to model whether each worker (who arrives at the marketplace) will decide to work on our task. In other words, we assume that each arrived worker has an independent probability $p$ of picking our task. Therefore if in any period of time the number of workers arrived at marketplace is $X$, then, assuming there are adequate tasks on offer, the number of workers from those who choose to work on our tasks will follow a Binomial distribution $\mathbf{Bin}(X, p)$. The value of $p$, the task acceptance probability, is not directly observable. We describe how it is related to the price or reward for the task, and how it can be estimated in Section~\ref{sec_prob_est}.

The task completion process is then a composition of NHPP and an independent Bernoulli process. In Statistics literature, such a process is called a \textit{Thinned Non-Homogeneous Poisson Process}~\cite{ross1996stochastic}. A Thinned NHPP is also a NHPP with a modified arrival-rate function $\lambda'(t) = \lambda(t) p$. 

\subsection{Task Acceptance Probability Estimation}\label{sec_prob_est}

Faridani et al.~\cite{faridani11} used a \textit{Discrete Choice Model} to characterize how workers select tasks from the marketplace. In economics, \textit{Discrete Choice Models} are used to estimate the probability of consumers choosing a specific product among a range of alternatives~\cite{mcfadden1973conditional}. Discrete Choice Models can be explained by utility theory: each worker chooses the task in the marketplace to maximize the utility (or net benefit) obtained. Workers may have different perceptions of his/her utility: it could depends on various factors such as hourly wage, number of tasks, task type, easiness of the tasks, or the knowledge gain during the process of finishing a task. Utility can not be directly observed, the only aspect that can be observed is the worker's behavior in the marketplace.

Under this model, the task acceptance probability parameter $p$ is simply the probability that the utility of our task exceeds the utility of every other task in the marketplace. Let $U_i$ be the utility of task $i$ in the marketplace based on some worker's perception, and without loss of generality we assume the utility of our task is $U_1$. Then
$p = \mathbf{Pr}(U_1 > \max_{i \not= 1} U_i)$.
In the \textit{Conditional Logit Model}~\cite{mcfadden1973conditional}~\cite{faridani11}, the utility $U_i$ of $i$th task has the following expression: $U_i = \mathbf{\beta}^T z_i + \epsilon_i$,
where $z_i$ are all observable attributes that may affect the utility of the task, and $\epsilon_i$ accounts for all unobserved factors that may affect the utility. In the model, the utility $U_i$ is assumed to be linearly correlated with all observed attributes with the shared coefficient vector $\beta$. The parameters $\epsilon_i$ are assumed to be independent with each other and follow the Gumbel distribution. Based on these assumptions, it can be derived that the probability of choosing each task follows a \textit{Multinomial Logit Distribution}:
$$p = \mathbf{Pr}(U_1 > \max_{i \not= 1} U_i) = \frac{\exp(\beta^T z_1)}{\sum_i \exp(\beta^T z_i)}$$

Now if we are able to change our task reward $c$, then the attribute vector of our task $z_1$ and the task acceptance probability $p$ will also change accordingly:
\begin{equation}\label{eqn_accept_prob}
p(c) = \frac{\exp(\beta^T z_1(c))}{\exp(\beta^T z_1(c)) + \sum_{i \not= 1} \exp(\beta^T z_i)}
\end{equation}


Equation~(\ref{eqn_accept_prob}) captures how task acceptance probability is related to task price or reward. Faridani et al.~\cite{faridani11} suggest using this equation directly in order to calculate the task acceptance probability, with parameters $\beta$ estimated from historical marketplace data using logistic regression. Another approach is to assume a parametric form of task acceptance probability function, and estimate parameters during a separate training phase. If we assume that the utility of our task is a linear function of task reward $c$, and that the sum of exponentials of the utilities of the other tasks is a fixed constant, then Equation~(\ref{eqn_accept_prob}) can be rewritten as:
\begin{equation}\label{eqn_simplified}
p(c) = \frac{\exp\{\frac{c}{s} - b\}}{\exp\{\frac{c}{s} - b\} + M}
\end{equation}
Hence if we have some training data (e.g., estimated value of $p(c)$ for different task reward $c$), then parameters $s, b, M$ can be estimated by statistical regression methods.

\eat{The parameters $\beta$ can be estimated from historical marketplace data using logistic regression~\cite{faridani11}.}
However, note that the inference of the mapping function $p(c)$ is not the focus of our paper. Here, we will assume that the expression of $p(c)$ is already known\eat{, using techniques from Faridani~\cite{faridani11}}. We will then use this expression to determine the optimal reward for each task in various scenarios.

\subsection{Problem Statement}\label{sec_prob_state}
Our goal is to design pricing algorithms for batch of $N$ identical crowdsourcing tasks. The user may specify either a monetary budget restriction (that is, the algorithm must ensure that all tasks are completed within a certain expected cost), or a time deadline (that is, the algorithm must ensure that all tasks are completed within a certain time). The unconstrained variable (monetary cost or overall time) is minimized. 

Following our discussion in the previous section, we model worker arrivals to the marketplace as a \textit{Non-Homogeneous Poisson Process} with a known arrival-rate parameter $\lambda(t)$. Each worker will pick up our task and complete it with probability $p(c)$, where the value of $p$ depends on the reward $c$ (typically in cents or dollars) for each task in our batch of tasks. The form of the mapping function $p(c)$ from task reward $c$ to task acceptance probability $p$ is assumed to be known: thus, we expect our techniques to be leveraged when the user ends up repeating similar tasks many times over a long period so that such history is available. This is not a drastic assumption to make: many companies, including Google, Ebay, Yahoo!, and Microsoft, repeatedly use human workers for tasks such as content moderation, categorization, spam detection, and search relevance.


At any time, we can monitor the number of remaining uncompleted tasks $n$. The task reward $c$  can be changed at any time, and the task acceptance probability $p$ will change accordingly. Note that some marketplaces may impose a minimum time only after which the task reward may be changed, and our algorithms adapt to that scenario as well. Overall, at any time $t$, the completion of tasks follows a NHPP with rate $\lambda(t) p(c)$, and for each completed task, $c$ units of monetary compensation are paid based on the task reward at that time.

Then, the problem is to determine and dynamically vary the rewards for each as yet unsolved task, such that the total monetary cost expended and the total time used for completing $N$ tasks are minimized. We focus on two scenarios:
\begin{denselist}
\item \textbf{Fixed Deadline Pricing} (Section~\ref{sec_fix_deadline}): In this scenario, the total time used to complete all $N$ tasks must be less than a deadline $T$. The goal is then to minimize the expected total expenditure. 
\item \textbf{Fixed Budget Pricing} (Section~\ref{sec_fixed_budget}): In this scenario, the total monetary budget $B$ for tasks is fixed upfront. The goal is then to minimize the expected total time to complete all tasks. 
\end{denselist}
\papertext{In our extended technical report~\cite{treport}, we describe a number of straightforward generalizations, including optimizing combinations of deadline and budget, capturing multiple task types, and incorporating accuracy and difficulty. In short, optimizing for combinations of deadline and budget is actually a simplification of the techniques described in the paper; capturing multiple tasks types is a straightforward generalization of the results of \cite{crowdscreen}; and incorporating accuracy and difficulty involves using results from \cite{crowdscreen} to first optimize for accuracy, then using that information while developing our techniques---we develop one exact, but intractable technique and two approximations for this case.}
\techreporttext{In Section~\ref{sec:discussion}, we describe a number of straightforward generalizations, including optimizing combinations of deadline and budget, capturing multiple task types, and incorporating accuracy and difficulty.}


\section{Fixed Deadline Pricing Strategy}\label{sec_fix_deadline}

It is common for task requesters in a crowdsourcing marketplace to require their tasks to be completed before a certain deadline. Under this scenario, the reward for each task in the batch of tasks should be as low as possible while making sure that all tasks can be completed before deadline.

In Faridani's work~\cite{faridani11}, a binary search process is used to find the smallest fixed task reward such that the total expected completion time is before the deadline. However, as implied by the NHPP worker-arrival model and also demonstrated in Figure~\ref{fig_mturk-track_hit_completion}, the task completion process is highly non-deterministic. Therefore, a dynamic pricing strategy should perform much better in terms of overall cost in this scenario: If the rate at which tasks are picked up by workers is faster than expected, we could decrease the reward for the remaining tasks to save money; on the other hand, if the tasks are picked up slower than expected, we could increase the reward to attract more workers to our tasks.

In this section, we design a pricing algorithm to determine how to set the reward for each task at each time point to minimize the expected total monetary cost, while meeting time constraints. We begin by modeling our decision process as a Markov process and use the model to present our basic pricing algorithm in Section~\ref{sec_mdp}. Since these algorithms may be expensive to compute, we present techniques that can help speed up the computation in Section~\ref{sec_speedup}. 
\techreporttext{Lastly, we consider different objectives in Section~\ref{sec_objectives}.}

\subsection{Markov Decision Process-based Solution}\label{sec_mdp}
\vspace{-3pt}
\stitle{Discretization:} Although in principle we may be able to change the task reward $c$ at any time, utilizing this freedom while designing pricing strategies would result in an intractable number of time points at which decisions need to be made. Instead, we discretize the total time before the deadline (i.e., the time between when the tasks were submitted to the marketplace and the deadline) into a number of equal-sized intervals. As we will see later on, beyond a point, discretization does not help, and therefore restricting our pricing algorithms to make decisions only at discrete time intervals does not affect the overall monetary cost, while significantly reducing the computation involved.

We partition all available time (from $t = 0$, i.e., start time, to $t = T$, i.e., the deadline) $[0, T]$ into $N_T$ small intervals: $[0, T / N_T), $ $[T / N_T, 2T / N_T), $ $\ldots, $ $[T - T / N_T, T)$ and further enforce that the reward $c$ for tasks may only be changed at the start of an interval.

\stitle{State Space:} After discretization, we can represent the state of processing of the batch of tasks at any time interval using a finite Markov chain. The states in this Markov chain are represented by a pair $(n, t)$, where $n$ is the remaining unsolved tasks and $t$ is the index of current time interval. The initial state is $(N, 0)$ and all states in the form of $(n, N_T)$ are final states (Recall that $N_T$ is the total number of time intervals).

An illustration of the state diagram is shown in Figure~\ref{fig_state_diagram}. The states are represented on a grid, where the number of unsolved tasks increases along the $y$-axis, and the number of time intervals elapsed increases along the $x$-axis. Our goal is then to set the prices $c_{n, t}$ upfront for all $n, t$, such that we have as few unsolved tasks as possible when $t = N_T$.

\stitle{Transitions:} Based on Equation~(\ref{eqn_poisson_proc}), $X_t$, the number of tasks completed during the $t$th time interval follows a Poisson distribution:
$X_i \sim \mathbf{Pois}(\cdot|\lambda = \lambda_t p(c_t))$
where $c_t$ is the task reward in $i$th time interval, and $\lambda_t$ is the total expected number of workers who arrived at marketplace during the $t$th time interval:
\begin{equation}\label{eqn_lambda_time_segment}
\lambda_t = \int_{s = (t - 1)T/N_T}^{tT/N_T} \lambda(s) ds
\end{equation}
At state $(n, t)$, say the task reward is set to be $c_{n, t}$; then, the transition probability between states is:

{\scriptsize
\begin{align}\label{eq:probtrans}
\mathbf{Pr}\{(n, t) \rightarrow (n - s, t + 1)|c_{n,t}\} & = \mathbf{Pois}(s|\lambda = \lambda_t p(c_{n,t}))\\ & = e^{-\lambda_t p(c_{n,t})} \frac{(\lambda_t p(c_{n,t}))^s}{s!}
\end{align}
}
\noindent where $\lambda_i$ is defined in Equation~(\ref{eqn_lambda_time_segment}) and $p(c_{n,t})$ is the task acceptance probability for the task reward $c_{n,t}$. The transition probability is slightly different when we are close to completion:
$$\mathbf{Pr}\{(n, t) \rightarrow (0, t + 1)|c_{n,t}\} = \mathbf{Pr}(\mathbf{Pois}(\cdot|\lambda = \lambda_t p(c_{n,t})) \geq n)$$

\begin{figure}[h]
\centering
\includegraphics[width = 2in]{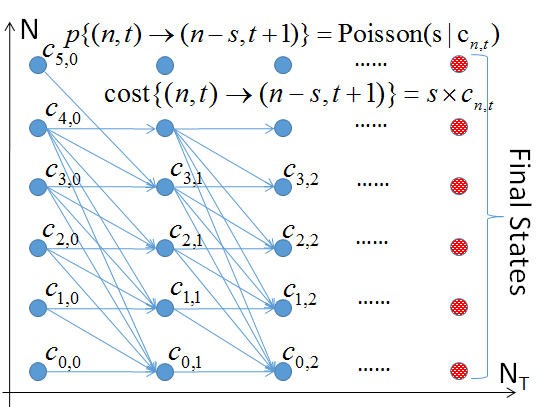}
\vspace{-5pt}
\caption{State diagram of Markov Decision Process. Some possible transitions are omitted in the figure for clarity.}
\label{fig_state_diagram}
\vspace{-10pt}
\end{figure}

\stitle{Costs:} In our problem, the transition cost between states is the total rewards paid for tasks completed in each time interval:
\begin{equation}
\label{eq:costtrans}
\mathbf{cost}\{(n, t) \rightarrow (n - s, t + 1)|c_{n,t}\} = s c_{n,t}
\end{equation}

For the final states $(n, N_T)$, we assign a fixed penalty for each of the remaining unsolved tasks:
$\mathbf{cost}\{(n, N_T)\} = n \times \mathbf{Penalty}$,
where the value of the parameter $\mathbf{Penalty}$ could be based on actual expenses needed to complete them post deadline (possibly by the task requester themselves), or simply be set large enough to ensure that with high probability no task will remain uncompleted.

\papertext{
In the above formulation, we have combined the number of unsolved tasks by the deadline and the monetary cost incurred into one objective. In our extended technical report~\cite{treport}, we consider a range of other constraints and objectives, including:
\begin{denselist}
\item Optimizing monetary cost while enforcing a constraint on the probability that there are additional tasks left after the deadline
\item Optimizing monetary cost while enforcing a constraint on the expected number of tasks left after the deadline
\end{denselist}
The techniques for these formulations are straightforward modifications of the techniques for the current formulation.
}

\stitle{Markov Decision Processes:} The problem of determining optimal task reward $c_{n,t}$ in $t$th time interval for state $(n, t)$ can be viewed as a \textit{Markov Decision Process (MDP)}. MDPs are commonly used to model optimization and decision making problems in a discrete time stochastic environment. The goal is of MDP optimization to determine the policy for every state to minimize the expected overall cost (in our problem it corresponds to determining the optimal task reward $c_{n,t}$ for each state).

\stitle{Dynamic Programming:} The above MDP optimization problem can be solved by \textit{Dynamic Programming (DP)}. Let $\mathbf{Opt}(n, t)$ denote the minimum expected total cost for all remaining $n$ tasks for the state $(n, t)$, and $\mathbf{Price}(n, t)$ denote the corresponding optimal reward for each task. Then $\mathbf{Opt}(n, t)$ and $\mathbf{Price}(n, t)$ satisfy the following equations.
{\scriptsize
\begin{align*}
\mathbf{Opt}(n, t) = \min_c \sum_{s=0}^n & [\mathbf{Opt}(n - s, t + 1) + s c] \\
& \times \mathbf{Pr}\{(n, t) \rightarrow (n - s, t + 1)|c\}\\
\mathbf{Price}(n, t) = \arg \min_c \sum_{s=0}^n & [\mathbf{Opt}(n - s, t + 1) + s c] \\
& \times \mathbf{Pr}\{(n, t) \rightarrow (n - s, t + 1)|c\}
\end{align*}
}
The values of $\mathbf{Opt}(n, t)$ and $\mathbf{Price}(n, t)$ can be sequentially determined\eat{ since the underlying Markov Chain is acyclic}.
That is, we start at $(\cdot, N_T)$, and work our way backwards using the equations above. Once we have computed the optimal $\mathbf{Opt}$ and $\mathbf{Price}$ for all $(\cdot, t+1)$, we can use the equations above to compute it for all $(\cdot, t)$ --- the optimal $c_{n,t}$ can be found by considering all possible price values since it needs to be an integral multiples of a minimal unit of price (In Amazon Mechanical Turk it is $1$ cent). 
\papertext{The pseudocode for the DP algorithm can be found in the extended technical report~\cite{treport}.}
\techreporttext{Algorithm~\ref{alg_simple_dp} gives the pseudocode of this DP algorithm.}

\techreporttext{
\begin{algorithm}
\scriptsize
\caption{Simple Dynamic Programming}
\label{alg_simple_dp}
\begin{algorithmic}
\Function{FindOptimalPriceForState}{$n$, $t$, $L$, $U$}
	\State $Opt(n, t) \leftarrow \infty$
	\For {$c = L$ to $U$}
		\State $Cost \leftarrow 0$, $Pr \leftarrow 0$
		\State $AcceptRate \leftarrow p(c)$
		\For {$i = 0$ to $n$}
			\State $p \leftarrow \mathbf{Pois}(i|\lambda(t) \times AcceptRate)$
			\State $Cost \leftarrow Cost + p \times (i c + Opt(n - i, t + 1))$
			\State $Pr \leftarrow Pr + p$
		\EndFor
		\State $Cost \leftarrow Cost + (1 - Pr) \times nc$
		\If {$Cost < Opt(n, t)$}
			\State $Opt(n, t) \leftarrow Cost$
			\State $Price(n, t) \leftarrow c$
		\EndIf
	\EndFor
\EndFunction

\Function{SimpleDP}{}
	\For{$i = 0$ to $N$}
		\State $Opt(i, N_T) \leftarrow i \times \mathbf{Penalty}$
	\EndFor

	\For{$t = N_T - 1$ to $0$}
		\For{$i = 0$ to $N$}
			\State FindOptimalPriceForState($i$, $t$, $0$, $C$)
		\EndFor
	\EndFor
\EndFunction
\end{algorithmic}
\end{algorithm}
}

\subsection{Speed-up Techniques}\label{sec_speedup}

The DP algorithm has a time complexity of $O(N^2 N_TC)$, where $C$ is the number of price choices we want to consider, which is intractable when $N$ is large or when $N_T$ or $C$ are fine-grained. Here we discuss some techniques to speed up the algorithm.

\stitle{Poisson Distribution Truncation:}
Notice that while making pricing decisions, the DP algorithm enumerates all possible number of tasks $s$ that can be picked up by workers during each time interval. However, for large $s$, the probability that more than $s$ tasks are completed in one time interval:
$$\mathbf{Pr}(\mathbf{Pois}(\cdot|\lambda) \geq s) = \sum_{k \geq s} e^{-\lambda} \frac{\lambda^k}{k!} \leq e^{-\lambda} \frac{\lambda^s}{s!} \frac{s}{s - \lambda}$$
becomes negligible, and thus the contribution of those terms in DP update formulas will also become negligible.

In practice, we could set a threshold $\epsilon$ for the probability $\mathbf{Pr}$ $(\mathbf{Pois}(\cdot|\lambda)$ $\geq s)$. If for some $s_0$, $\mathbf{Pr}(\mathbf{Pois}(\cdot|\lambda) \geq s_0)$ is less than the threshold $\epsilon$, all the terms $s > s_0$ can be ignored safely. Table~\ref{tbl_safe_num_task_ub} shows the value of $s_0$ for $\epsilon = 10^{-9}$ and different values of $\lambda$.
\begin{table}[h]
\centering
\scriptsize
\begin{tabular}{|c|c|c|}
\hline
\textbf{Threshold} $\epsilon$ & \textbf{Poisson mean} $\lambda$ & $s_0$\\
\hline 
$10^{-9}$ & 10 & 35\\
\hline 
$10^{-9}$ & 20 & 53\\
\hline 
$10^{-9}$ & 50 & 99\\
\hline
\end{tabular}
\caption{The value of $s_0$ for different thresholds $\epsilon$ and Poisson distribution means}
\label{tbl_safe_num_task_ub}
\vspace{-10pt}
\end{table}

The next theorem provides an upper bound of error produced by Poisson Distribution Truncation:
\begin{theorem}
The exact optimal total cost $\mathbf{Opt}(n, t)$ and estimated value of optimal total cost $\mathbf{Est}_{trunc}(n, t)$ using Poisson Distribution Truncation and the exact total cost $\mathbf{Cost}_{trunc}(n, t)$ based on the optimal policy obtained using Poisson Distribution Truncation satisfies the following inequality:
\begin{align*}
\mathbf{Est}_{trunc}(n, t) \leq \mathbf{Opt}(n, t) \leq \mathbf{Cost}_{trunc}(n, t)\\
\leq \mathbf{Est}_{trunc}(n, t) + \epsilon n (N_T - t) C
\end{align*}
where $C$ is the upper bound of task reward in any state. In particular, 
$|\mathbf{Opt}(N, 0) - \mathbf{Cost}_{trunc}(N, 0)| \leq \epsilon N N_T C$
\end{theorem}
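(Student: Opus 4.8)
The plan is to establish the chain of three inequalities separately, since each has a different flavor: the first is a monotonicity property of the truncated recursion, the second is immediate from Bellman optimality, and only the third requires genuine work. Throughout, let $\pi = \mathbf{Price}_{trunc}$ denote the policy extracted from the truncated DP; the essential structural fact I would exploit is that $\mathbf{Est}_{trunc}$ and $\mathbf{Cost}_{trunc}$ both evaluate \emph{this same} $\pi$, differing only in the transition law used (truncated versus exact), so no $\min$/$\arg\min$ subtleties enter the hard step.

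For the lower bound $\mathbf{Est}_{trunc}(n,t)\le\mathbf{Opt}(n,t)$ I would induct backwards on $t$. At $t=N_T$ both equal $n\,\mathbf{Penalty}$. For the step, let $c^*$ attain $\mathbf{Opt}(n,t)$; evaluating the truncated recursion at $c^*$ only over-estimates $\mathbf{Est}_{trunc}(n,t)$, and that truncated sum is obtained from the exact Bellman sum at $c^*$ by replacing each $\mathbf{Opt}(n-s,t+1)$ with the no-larger $\mathbf{Est}_{trunc}(n-s,t+1)$ (induction hypothesis) and by deleting the terms $s>s_0$, which are nonnegative since costs and probabilities are nonnegative. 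Both edits decrease the sum, giving the claim. The middle inequality $\mathbf{Opt}(n,t)\le\mathbf{Cost}_{trunc}(n,t)$ is then immediate: $\mathbf{Cost}_{trunc}$ is the exact expected cost of the feasible policy $\pi$, while $\mathbf{Opt}(n,t)$ is the minimum over all policies, so the exact cost of any fixed policy is at least $\mathbf{Opt}(n,t)$.

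The main work is the third inequality. I would set $D(n,t)=\mathbf{Cost}_{trunc}(n,t)-\mathbf{Est}_{trunc}(n,t)\ge 0$ and subtract the two value recursions at the common price $c=\pi(n,t)$, writing $\lambda=\lambda_t p(c)$ and $P_c(s)=\mathbf{Pois}(s\mid\lambda)$. For $n\le s_0$ the two sums share the same range, so $D(n,t)=\sum_{s=0}^{n}D(n-s,t+1)P_c(s)$, a sub-convex combination of next-stage errors. For $n>s_0$ an extra remainder $R(n,t)=\sum_{s>s_0}[\mathbf{Cost}_{trunc}(n-s,t+1)+sc]P_c(s)$ appears, collecting exactly the overflow branches the truncation discards. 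I would then prove $D(n,t)\le\epsilon n(N_T-t)C$ by backward induction using two ingredients: the truncation guarantee $\sum_{s>s_0}P_c(s)\le\mathbf{Pr}(\mathbf{Pois}(\cdot\mid\lambda)\ge s_0)<\epsilon$, and a uniform cost-to-go bound $\mathbf{Cost}_{trunc}(m,\tau)\le mC$ (proved by its own backward induction: each of the $m$ tasks is charged at most $C$, whether paid as a reward or as a terminal penalty). These combine to telescope the remainder, $R(n,t)\le\sum_{s>s_0}[(n-s)C+sC]P_c(s)=nC\sum_{s>s_0}P_c(s)\le\epsilon n C$, since the payment $sc$ in the overflow interval plus the discarded future cost never exceed $nC$. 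Feeding $D(n-s,t+1)\le\epsilon n(N_T-t-1)C$ into the recursion and adding this $\epsilon n C$ closes the induction.

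I expect the main obstacle to be precisely this cost-to-go bound and the penalty bookkeeping: the target bound contains only the reward cap $C$, so the argument must charge each task at most $C$ \emph{in total}, and one must check that the discarded branches (where many tasks complete at once, reducing the remaining count) cannot inflate $R(n,t)$. Once $\mathbf{Cost}_{trunc}(m,\tau)\le mC$ is secured, the estimate $(n-s)C+sc\le nC$ is routine and the rest follows mechanically. Finally, the displayed special case is just the chain at $(N,0)$: combining $\mathbf{Est}_{trunc}(N,0)\le\mathbf{Opt}(N,0)\le\mathbf{Cost}_{trunc}(N,0)\le\mathbf{Est}_{trunc}(N,0)+\epsilon N N_T C$ yields $|\mathbf{Opt}(N,0)-\mathbf{Cost}_{trunc}(N,0)|\le\epsilon N N_T C$.
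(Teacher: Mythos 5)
Your proposal is correct and follows essentially the same route as the paper: the paper's entire argument is the single assertion that ``the former inequalities can be proved by induction in a very straight-forward manner,'' and your backward induction --- the error recursion for $D(n,t)=\mathbf{Cost}_{trunc}(n,t)-\mathbf{Est}_{trunc}(n,t)$ together with the remainder bound $R(n,t)\le \epsilon n C$ contributing $\epsilon n C$ per remaining time interval --- is precisely that induction carried out in full. The one caveat, which you yourself flag as the main obstacle, is that your cost-to-go bound $\mathbf{Cost}_{trunc}(m,\tau)\le mC$ (and hence the theorem as stated) implicitly requires $\mathbf{Penalty}\le C$, i.e., $C$ must be read as bounding the per-task cost in \emph{every} state including the terminal penalty; the paper's proof never addresses this.
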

\papertext{The proof of this and subsequent theorems are omitted due to space considerations and can be found in our technical report~\cite{treport}.}
\techreporttext{
\begin{proof}
The former inequalities can be proved by induction in a very straight-forward manner. The last inequality involving state $(N, 0)$ is direct implication of former inequalities.
\end{proof}
}

\stitle{Monotonicity of Pricing Decision:}
Another speed-up technique relies on the following natural conjecture:
\newtheorem{conjecture}{Conjecture}
\begin{conjecture}
The optimal reward $\mathbf{Price}(n, t)$ for each task is non-decreasing with respect to $n$ for any fixed value of $t$.
\end{conjecture}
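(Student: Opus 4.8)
The plan is to prove the conjecture by backward induction on the time index $t$, simultaneously carrying along a structural property of the value function $\mathbf{Opt}(\cdot, t)$ strong enough to force monotonicity of the minimizer. Concretely, I would adopt the inductive hypothesis that, for each fixed $t$, $\mathbf{Opt}(n, t)$ is convex and non-decreasing in $n$. The base case is $t = N_T$, where $\mathbf{Opt}(n, N_T) = n \times \mathbf{Penalty}$ is linear (hence convex) and non-decreasing, and no pricing decision is made. The inductive step, assuming the hypothesis at $t+1$, must establish two things at time $t$: (i) that $\mathbf{Price}(n, t)$ is non-decreasing in $n$, and (ii) that $\mathbf{Opt}(n, t)$ again inherits convexity and monotonicity in $n$, so that the induction propagates.

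For part (i) I would invoke monotone comparative statics (Topkis's theorem). Writing the objective minimized at state $(n, t)$ as
$$G_t(n, c) = \sum_{s=0}^{n} \big[\mathbf{Opt}(n - s, t + 1) + s c\big]\, \mathbf{Pr}\{(n, t) \to (n - s, t + 1) \mid c\},$$
it suffices to show that $G_t$ has decreasing differences (is submodular) in $(n, c)$, i.e. that the marginal cost of one extra task, $G_t(n+1, c) - G_t(n, c)$, is non-increasing in $c$. Once this holds, the discrete-lattice form of Topkis guarantees that $\arg\min_c G_t(n, c)$ is non-decreasing in $n$, which is exactly the claim; the discreteness of $c$ (integer multiples of a cent) is handled by the lattice version rather than by first-order conditions. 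The economic intuition behind submodularity is that raising the price speeds completion and thereby lowers the marginal penalty attributable to an additional outstanding task, and it is the convexity of $\mathbf{Opt}(\cdot, t+1)$ from the hypothesis that makes this marginal-cost comparison survive the Poisson thinning in the transition.

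Part (ii) --- preserving convexity of $\mathbf{Opt}(\cdot, t)$ under the minimization over $c$ --- is where I expect the genuine difficulty to lie, and it is almost certainly why the statement is posed as a conjecture rather than proved. The map from $\mathbf{Opt}(\cdot, t+1)$ to $G_t(\cdot, c)$ composes (a) a price-dependent Poisson thinning with mean $\mu(c) = \lambda_t\, p(c)$ and (b) a boundary truncation capping completions at $n$, so that the argument of $\mathbf{Opt}$ is $(n - s)^+$ and the reward paid is $\min(s, n)\, c$. Convexity is preserved by expectation against a \emph{fixed} distribution, but here the distribution depends on the decision variable, $c$ is coupled into the cost through the term $s c$, and the $\min$ with $n$ introduces a kink at the boundary; moreover a pointwise minimum over $c$ of convex functions need not be convex, so no standard closure property applies directly. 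My plan would be to reparametrize the decision in terms of $\mu$ (monotone in $c$), write the second difference of $\mathbf{Opt}(\cdot, t)$ in $n$ explicitly, and show it stays non-negative using convexity of $\mathbf{Opt}(\cdot, t+1)$ together with log-concavity / monotone-likelihood-ratio properties of the Poisson family. I would first attempt this in the regime where $\mathbf{Pr}(\mathbf{Pois}(\mu) \ge n)$ is negligible (as justified by the Poisson truncation theorem above), deferring the boundary term; the interaction of the envelope (the $\min$ over $c$) with the $n$-dependent truncation is the step most likely to resist a clean argument and may require an additional structural assumption on $p(c)$.
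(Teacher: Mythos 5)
The paper does not prove this statement; it is left as a conjecture supported only by numerical evidence (the authors report that over many runs of the DP with varying $\lambda$, $N$, $N_T$ the computed optimal strategies never violated it). So there is no paper proof to match your plan against, and the relevant question is whether your plan actually closes the gap. It does not, and you say so yourself: both load-bearing steps are stated as goals rather than established. For part (i), the decreasing-differences claim for $G_t(n,c)$ is supported only by an appeal to economic intuition; verifying it requires comparing $G_t(n+1,c)-G_t(n,c)$ across $c$ when the summation range, the boundary truncation at $s=n$, and the transition law all change with $n$ and $c$ simultaneously, and the intuition you give already presupposes convexity of $\mathbf{Opt}(\cdot,t+1)$, which is the unproved part (ii). For part (ii), you correctly identify the obstructions --- the transition kernel depends on the control, the reward term $sc$ couples control to the realized transition, the pointwise minimum over $c$ of convex functions need not be convex, and the $\min(s,n)$ truncation breaks the clean second-difference computation --- but you do not overcome any of them. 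Deferring the boundary term via Poisson truncation does not help here: the truncation theorem bounds the error of an \emph{approximate} DP, whereas the conjecture is about the exact minimizer, and a monotonicity statement about an $\arg\min$ is not robust to small additive perturbations of the objective.

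In short, your proposal is a reasonable research program (backward induction carrying convexity, plus Topkis/lattice arguments for monotone selections --- the standard machinery for such MDP monotonicity results), and it correctly diagnoses why the statement is hard. But as a proof it is incomplete at exactly the two places where a proof would have to do work, and the paper offers nothing stronger. If you want to make progress, the regime worth attacking first is the one where $p(c)$ has enough structure (e.g., the logit form of Equation~(\ref{eqn_simplified})) that the reparametrization by $\mu=\lambda_t p(c)$ gives a concrete, log-concave family for which the second differences can be computed explicitly; a general-$p(c)$ claim may simply be false, and the conjecture as stated carries no hypothesis on $p$ beyond monotonicity.
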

Intuitively, this conjecture says that with a fixed deadline, the more remaining tasks we have, the higher reward we should set for each task. Over repeated trials with many different values of $\lambda, N, N_T$, we tried generating optimal strategies (using the basic DP algorithm described in the previous section), and the optimal strategies never violate the preceding conjecture.

If we assume this conjecture to be correct, then the following can be used to speed up the DP process. The main idea is to reduce the search range of optimal reward $c$ for each state: suppose $\mathbf{Price}(a, t)$ and $\mathbf{Price}(c, t)$ are already known, then for any $a < b < c$, $\mathbf{Price}(b, t)$ lies in range $[\mathbf{Price}(a, t), \mathbf{Price}(c, t)]$. 
\techreporttext{Figure~\ref{fig_dp2_illustrate} illustrates the idea of this algorithm.} 
For time interval $t$, we first search for the optimal reward for state $(\frac{N}{2}, t)$, then states $(\frac{N}{4},t)$ and $(\frac{3N}{4}, t)$, then states $(\frac{kN}{8}, t)$ for $k = 1, 3, 5, 7$. This process continues until the optimal reward for every state has been found. Thus, the  optimal reward searching process can be represented using a binary tree, where each node represents the optimal reward search range of certain state, and the search range of optimal reward is bounded by optimal reward already found in upper level nodes. 
Further, the search range of nodes in each level sum up to $C$, the pre-specified upper bound of task reward, while the number of levels is bounded by $O(\log n)$.
\papertext{Our technical report~\cite{treport} has a diagram depicting these relationships.}
Therefore, the algorithm 
\techreporttext{(Algorithm~\ref{alg_efficient_dp})}
\papertext{(which can be found in the extended technical report~\cite{treport})} has a time complexity of $O(N_TN(N + C \log N))$. 

Finally, although not improving time complexity, the monotonicity of task rewards $\mathbf{Price}(n, t)$ with respect to $t$ for fixed $n$ (i.e.,  when the number of remaining tasks are fixed, the rewards increase as we get closer to the deadline), can also be used to improve algorithm efficiency by reducing the optimal reward search range.
\techreporttext{
\begin{figure*}
\vspace{-10pt}
\centering
\includegraphics[width = 4in]{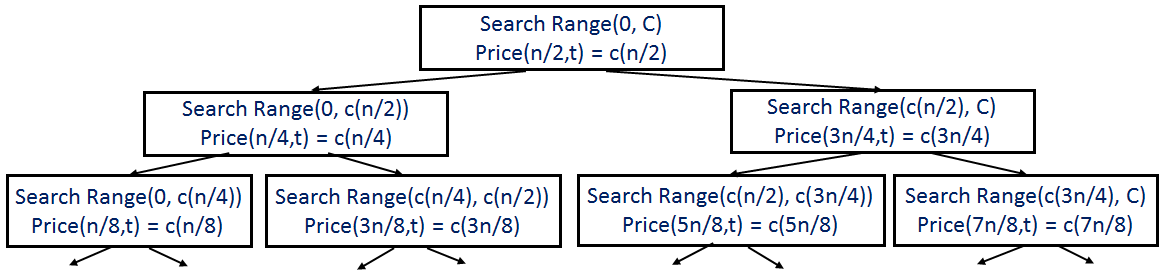}
\caption{The graphical illustration of the efficient algorithm \techreporttext{(Algorithm~\ref{alg_efficient_dp})}, states are represented as nodes in the tree, the search range of each node can be bounded by the optimal price of nodes with lower depth}
\vspace{-10pt}
\label{fig_dp2_illustrate}
\end{figure*}
}
\techreporttext{
\begin{algorithm}[h]
\scriptsize
\caption{Efficient Dynamic Programming}
\label{alg_efficient_dp}
\begin{algorithmic}
\Function{FindOptimalPriceForTime}{$t$, $l$, $r$, $L$, $R$}
	\State $m \leftarrow \lfloor \frac{l + r}{2} \rfloor$
	\State FindOptimalPriceForState($m$, $t$, $L$, $R$)
	\State $pm \leftarrow Price(m, t)$
	\If {$l < m$}
		\State FindOptimalPriceForTime($t$, $l$, $m - 1$, $L$, $pm$)
	\EndIf
	\If {$m < r$}
		\State FindOptimalPriceForTime($t$, $m + 1$, $r$, $pm$, $R$)
	\EndIf	
\EndFunction

\Function{ImprovedDP}{}
	\For{$i = 0$ to $N$}
		\State $Opt(i, N_T) \leftarrow i \times \mathbf{Penalty}$
	\EndFor

	\For{$t = N_T - 1$ to $0$}
		\State FindOptimalPriceForTime($t$, $0$, $N$, $0$, $C$)
	\EndFor
\EndFunction
\end{algorithmic}
\end{algorithm}
}

\techreporttext{
\subsection{Final State Penalties}\label{sec_objectives}

In our MDP formulation, the penalties for final states $\mathbf{cost}\{(n, N_T)\}$ are proportional to the number of remaining tasks left unsolved. Therefore, the MDP is optimizing the linear combination of the total reward paid for the tasks completed before deadline and the number of remaining tasks after deadline:
\begin{align*}
\mathbf{Q}  = \mathbb{E}(\textbf{transition cost}) 
	 + \mathbb{E}(\textbf{\# of unsolved tasks}) \times \mathbf{Penalty}
\end{align*}

The parameter $\mathbf{Penalty}$ controls the trade-off between these two quantities: higher value of $\mathbf{Penalty}$ results in higher average reward for each tasks and less remaining tasks after deadline on average. 

Sometimes, it may be more convenient to directly optimize the expected total expenditure on crowdsourcing marketplace, with a constraint on the expected remaining uncompleted tasks after deadline.
\begin{align*}
\textit{Minimize } &  \mathbb{E}(\textbf{transition cost}) \\
\textit{s.t. } &  \mathbb{E}(\textbf{\# of remaining tasks}) \leq \mathbf{Bound}
\end{align*}

Theorem~\ref{thm_equiv} shows that two formulations are closely related.  
\begin{theorem}\label{thm_equiv}
For every value of parameter $\mathbf{Penalty}$, there exists a corresponding value of parameter $\mathbf{Bound}$ such that two formulations above result in the same optimal solution.
\end{theorem}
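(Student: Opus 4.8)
The plan is to recognize $\mathbf{Penalty}$ as a Lagrange multiplier and to exploit the easy (forward) direction of Lagrangian duality, which holds exactly because the penalty coefficient is non-negative. For any deterministic Markov pricing policy $\pi$ --- i.e.\ a choice of $c_{n,t}$ for every state $(n,t)$ --- I would write $\mathrm{Cost}(\pi) = \mathbb{E}_\pi(\textbf{transition cost})$ and $R(\pi) = \mathbb{E}_\pi(\textbf{\# of unsolved tasks})$, where both quantities range over the \emph{same} policy space in the two formulations. The penalty formulation minimizes $\mathrm{Cost}(\pi) + \mathbf{Penalty}\cdot R(\pi)$, and I denote by $\pi^\star$ the policy returned by the DP of Section~\ref{sec_mdp}, which is optimal for this objective.

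Given the fixed value of $\mathbf{Penalty}$, I would simply set $\mathbf{Bound} := R(\pi^\star)$, the expected number of leftover tasks actually produced by $\pi^\star$, and then argue that $\pi^\star$ is also optimal for the constrained formulation with this $\mathbf{Bound}$. First, $\pi^\star$ is feasible, since $R(\pi^\star) = \mathbf{Bound} \leq \mathbf{Bound}$. For optimality, suppose toward contradiction that some feasible $\pi'$ (so $R(\pi') \leq \mathbf{Bound} = R(\pi^\star)$) achieves $\mathrm{Cost}(\pi') < \mathrm{Cost}(\pi^\star)$. Then, using $\mathbf{Penalty}\geq 0$ together with $R(\pi') \leq R(\pi^\star)$,
\begin{align*}
\mathrm{Cost}(\pi') + \mathbf{Penalty}\cdot R(\pi') &\leq \mathrm{Cost}(\pi') + \mathbf{Penalty}\cdot R(\pi^\star)\\
&< \mathrm{Cost}(\pi^\star) + \mathbf{Penalty}\cdot R(\pi^\star),
\end{align*}
which contradicts the optimality of $\pi^\star$ for the penalty objective. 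Hence no feasible policy beats $\pi^\star$ on cost, so $\pi^\star$ solves the constrained problem, and the two formulations share the common optimal solution $\pi^\star$. The steps, in order, are: (i) fix notation for $\mathrm{Cost}(\pi)$ and $R(\pi)$ over the common policy space; (ii) invoke optimality of the DP solution $\pi^\star$ for the penalty objective; (iii) define $\mathbf{Bound} := R(\pi^\star)$ and check feasibility; (iv) run the one-line exchange argument above.

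The main obstacle --- really the only place requiring care --- is the sign condition and the precise reading of ``same optimal solution.'' The exchange step breaks if $\mathbf{Penalty}$ could be negative, so I would note explicitly that a penalty on leftover tasks is non-negative by construction, and that $\mathbf{Penalty}=0$ is a degenerate but harmless boundary case. A secondary subtlety is ties: the argument establishes that $\pi^\star$ is \emph{a} minimizer of the constrained problem, which is exactly what is needed for the two formulations to admit a common optimal solution. I would emphasize that the theorem asserts only this forward direction (given $\mathbf{Penalty}$, construct $\mathbf{Bound}$), so I do not need the harder converse, where constrained MDPs may require randomized policies and strong duality can fail at bound values not achieved by any penalty-optimal policy.
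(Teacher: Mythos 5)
Your proposal is correct and follows essentially the same route as the paper's proof: set $\mathbf{Bound}$ equal to the expected number of unsolved tasks under the penalty-optimal policy, then show any constrained-feasible competitor with strictly lower transition cost would contradict optimality for the penalty objective. Your write-up merely makes explicit the exchange inequality and the $\mathbf{Penalty}\geq 0$ sign condition that the paper leaves implicit.
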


\begin{proof}
For any fixed value of $\mathbf{Penalty}$ parameter, assume the optimal solution for the original MDP formulation is $\mathbf{Opt}$. Let $\mathbf{Bound}$ to be the expected number of unsolved tasks in $\mathbf{Opt}$. For any other solution $\mathbf{Sol}$, if the expected number of unsolved tasks in $\mathbf{Sol}$ is less than or equal to $\mathbf{Bound}$, then the expected transition cost of $\mathbf{Sol}$ must be no less than $\mathbf{Opt}$'s (Otherwise the optimality of $\mathbf{Opt}$ is violated). Therefore, $\mathbf{Opt}$ is also optimal in the second formulation.
\end{proof}

Therefore, for any fixed value of parameter $\mathbf{Bound}$, we can perform binary search for the value of parameter $\mathbf{Penalty}$ such that the solution to former formulation is also a solution to latter formulation.

The original final state penalty could extended as follows:
\begin{align*}
\mathbf{cost}\{(n, N_T)\} = \left\{ 
\begin{array}{lr}
(n + \alpha) \times \mathbf{Penalty} & \text{if } n > 0\\
0 & \text{if } n = 0
\end{array}
\right.
\end{align*}
which enforces an extra penalty on the existence of remaining tasks. This formulation may be more suitable for cases where any remaining task would be problematic but the number of remaining tasks does not really matter. Just like the scenario above, there is a correspondence between MDP and the following formulation.
\begin{align*}
\textit{Minimize } & \mathbb{E}(\textbf{transition cost})\\
\textit{s.t. } & \mathbb{E}(\textbf{\# of remaining tasks}) ~ + \\ & \alpha \times \mathbf{Pr}(\textbf{\# of remaining tasks} > 0) \leq \mathbf{Bound}
\end{align*}
Thus, the extended penalty setting would not only bound the average number of unsolved tasks but also bound the probability that there exists at least one remaining task.
}


\section{Fixed Budget Pricing Strategy}\label{sec_fixed_budget}

In this section we focus on another version of pricing problem: given a total monetary budget for all tasks, our objective is to minimize the expected total time when all tasks are completed.

Although, like in the previous section, we may still change the task reward dynamically, we will demonstrate that exercising this freedom does not help much in this scenario. In fact, we will prove that a static pricing strategy is nearly optimal. 

\subsection{Static Pricing Strategy}\label{sec_static_strategy}

We first define what we mean to be a \textbf{Static Pricing Strategy}:
\newdef{definition}{Definition}
\begin{definition}
A \textit{static pricing strategy} assigns a reward to each of the $N$ tasks up-front (i.e., at the time the tasks are submitted to the marketplace), and then does not change this price subsequently. Note that the rewards need not be the same for all tasks. 
\end{definition}


Even though for a static pricing strategy tasks are submitted to the marketplace at the beginning with possibly different rewards, at any time, only the tasks with the highest reward will be picked up by workers. Thus,  the rate at which tasks are picked up by workers will depend solely on the highest reward among all tasks (This property can be shown by \textit{Utility Theory} in Section~\ref{sec_prob_est}). Later on, when the tasks with the highest reward are exhausted, workers will start to pick up tasks with a lower reward; as a result the task acceptance rate will drop accordingly.

Note that static pricing strategies are a strict restriction of general dynamic pricing strategies. To see this, observe that for every static pricing strategy, there is an equivalent dynamic pricing strategy which changes the task reward for all tasks right after each task is completed. Therefore, the optimal static pricing strategy cannot have a lower total latency than the optimal dynamic pricing strategy. However we will show that in fact, the former can have as low expected total latency as the latter.

\subsection{Optimality of Static Pricing Strategy}\label{sec_optimality}

We now show that the optimal static pricing strategy has the minimum expected total latency for completing a given batch of tasks among all possible pricing strategies. Our main result will be Theorem~\ref{thm_static_opt}, described in Section~\ref{sec_worker_quantity}. Subsequent sections will focus on the proof and describe the algorithms. 

\subsubsection{Worker-Arrival Quantity}\label{sec_worker_quantity}

Recall that from Section~\ref{sec_model}, the workers arrive at the marketplace following a NHPP, and decide whether to work on our task following an independent Bernoulli process. Let $T$ be the random variable denoting the total time elapsed before all tasks are completed, and $W$ be the random variable denoting the total number of workers that have arrived at the marketplace before all the tasks are completed. Based on our model, the distribution of $T$ conditioned on $W$ depends only on the arrival-rate parameter $\lambda(t)$, and is independent of the pricing strategy. Suppose we use a pricing strategy $S$, then the expected value of $T$ can be expressed as
\techreporttext{:$$\mathbb{E}[T|S] = \int_W \mathbb{E}[T|W] \mathbf{Pr}(W|S) dW$$}
\papertext{$\mathbb{E}[T|S] = \int_W \mathbb{E}[T|W] \mathbf{Pr}(W|S) dW$.}

Therefore our goal is to choose the optimal pricing strategy such that its induced distribution $\mathbf{Pr}(W|S)$ minimizes $\mathbb{E}[T|S]$. Now if $E[T|W]$ is linear in $W$, then we have:
$$\mathbb{E}[T|S] = \int_W k W \mathbf{Pr}(W|S) dW = k \mathbb{E}[W|S]$$
which means that minimizing $\mathbb{E}[T|S]$ is equivalent to minimizing $\mathbb{E}[W|S]$. Minimizing the latter quantity is much more straight-forward as we will show in next few sections. 
\papertext{The justification for this linearity assumption is omitted due to space limitations, and can be found in the extended technical report~\cite{treport}.}
\techreporttext{The justification of this linearity assumption will be shown in Section~\ref{sec_linear_assumption}.}

The next theorem states that static pricing strategy is optimal in terms of minimizing the expected number of worker-arrivals $\mathbb{E}[W|S]$ and therefore expected latency $\mathbb{E}[T|S]$. We will prove the theorem in the next section. 

\begin{theorem}\label{thm_static_opt}
There exists a static pricing strategy $S$ that minimizes the expected number of total worker-arrivals $\mathbb{E}[W|S]$, and therefore minimizes the expected total latency $\mathbb{E}[T|S]$ among all possible pricing strategies.
\end{theorem}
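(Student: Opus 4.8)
The plan is to reduce the continuous-time dynamic pricing problem to a discrete Markov decision process (MDP) over worker arrivals and then exploit a structural property of its optimal policy. First I would observe that, as far as the random variable $W$ is concerned, the arrival-rate function $\lambda(t)$ and absolute time are irrelevant: $W$ merely counts worker arrivals, and by the model each arrival independently picks up one of our tasks with probability $p(c)$, where $c$ is the current highest reward. Hence the process is a sequence of Bernoulli trials whose success probability is the only thing a strategy controls, and minimizing $\mathbb{E}[W|S]$ can be recast as an MDP whose state is the pair $(n,b)$ of remaining tasks and remaining budget, whose action in each state is a price $c \le b$, and whose transition sends $(n,b)$ to $(n-1,\, b-c)$ with probability $p(c)$ and stays at $(n,b)$ otherwise, incurring unit cost (one arrival) per step.

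Next I would invoke the standard fact that an expected-total-cost (stochastic shortest-path) MDP of this form, which reaches the absorbing set $\{n=0\}$ with probability one since $p(c)>0$, admits an optimal stationary deterministic policy $c^{*}(n,b)$; because the trials are memoryless, conditioning the price on the full history or on elapsed time cannot beat conditioning on the sufficient statistic $(n,b)$. The crucial observation is then that a stationary policy induces a \emph{deterministic} trajectory of states: from $(n,b)$ the chain can only move to the single successor $(n-1,\, b-c^{*}(n,b))$, with all randomness confined to the geometric number of failed arrivals before the move. Consequently the sequence of completion prices $c_1, c_2, \ldots, c_N$ produced by the optimal strategy is itself deterministic, satisfies $\sum_{k=1}^{N} c_k \le B$, and yields objective value $\mathbb{E}[W] = \sum_{k=1}^{N} 1/p(c_k)$, since the $k$-th completion requires a geometric number of arrivals with mean $1/p(c_k)$.

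I would finish with a symmetrization step: both the objective $\sum_k 1/p(c_k)$ and the budget constraint $\sum_k c_k \le B$ are invariant under permutations of the multiset $\{c_1,\ldots,c_N\}$, so the order in which the prices are realized is immaterial. Therefore this optimal multiset can be assigned up front as the rewards of the $N$ tasks; the resulting static strategy completes them in non-increasing order of reward (as noted in Section~\ref{sec_static_strategy}) yet attains exactly the same $\sum_k 1/p(c_k)$ and respects the budget, because every partial sum of nonnegative prices is bounded by the total. Hence the optimal static strategy matches the optimal dynamic strategy in $\mathbb{E}[W|S]$, and by the linearity reduction $\mathbb{E}[T|S] = k\,\mathbb{E}[W|S]$ established above it is simultaneously optimal for expected latency.

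The main obstacle I anticipate is rigorously justifying the two reductions that make the deterministic-trajectory argument valid: first, that restricting prices to depend only on $(n,b)$ rather than on elapsed time, the current value of $\lambda(t)$, or the count of past failures loses nothing for $\mathbb{E}[W]$, which rests on the memorylessness of the thinned arrival process; and second, that an optimal policy may be taken stationary and deterministic even though the budget is a hard rather than an expected constraint. I would handle the latter by keeping $b$ in the state and noting that, since cost is consumed deterministically along a stationary trajectory, the hard and expected budget constraints coincide, so no randomized tie-breaking between actions is ever required.
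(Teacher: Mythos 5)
Your proof follows essentially the same route as the paper's: the same $(n,b)$ MDP with unit cost per worker arrival, whose single-successor structure forces the optimal dynamic policy to be semi-static (Theorem~\ref{thm_opt_semi}), the same geometric decomposition giving $\mathbb{E}[W]=\sum_{i}1/p(c_i)$ (Theorem~\ref{thm_eff_semi}), and the same permutation-invariance step converting the optimal semi-static sequence into a static one. The argument is correct; if anything, you are more explicit than the paper about why an optimal stationary deterministic policy exists for this stochastic shortest-path formulation.
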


\techreporttext{
\subsubsection{Linearity Assumption Justification}\label{sec_linear_assumption}
In this section we justify the linearity assumption that $E[T|W] = kW$. First notice that $T$ has the following conditional distribution function conditioned on $W$:
$$F_{T|W}(t) = \mathbf{Pr}(T \leq t|W) = \mathbf{Pr}(N(t) \geq W)$$
where $N(t)$ is the random variable denoting the number of workers who have arrived at the marketplace between time $0$ and time $t$. Based on the NHPP model, $N(t)$ follows Poisson distribution:
\begin{equation}\label{eqn_NHPP}
N(T) \sim \mathbf{Pois}(\cdot|\lambda = \int_0^T \lambda(t) dt)
\end{equation}

As shown in Figure~\ref{fig_mturk-track_hit_completion}, $\lambda(t)$ varies periodically and is relatively stable over a long period. Thus $\int_0^T \lambda(t)$ is approximately proportional to $T$:
$$\Lambda(T) = \int_0^T \lambda(t) dt \approx \bar \lambda T$$
where $\bar \lambda$ is the average worker-arrival rate in the marketplace. On substituting it into Equation~(\ref{eqn_NHPP}) we have,
\begin{align*}
\mathbf{Pr}(N(t) \geq W) & \approx 1 - \sum_{k=0}^{W-1} \mathbf{Pois}(k|\lambda = \bar \lambda T)\\
& = 1 - e^{-\bar \lambda T} \sum_{k=0}^{W-1} \frac{(\bar \lambda T)^k}{k!}
\end{align*}
Therefore\footnote{We have used a fact in probability theory that for any non-negative random variable $X$, $\mathbb{E}(X) = \int_0^\infty \mathbf{Pr}(X > t) dt$},
\begin{align*}
\mathbb{E}(T|W) & = \int_0^\infty (1 - F_{T_W}(t)) dt \approx \int_0^\infty e^{-\bar \lambda t} \sum_{k=0}^{W-1} \frac{(\bar \lambda t)^k}{k!} dt\\
& = \sum_{k=0}^{W-1} \int_0^\infty e^{-\bar \lambda t} \frac{(\bar \lambda t)^k}{k!} dt = \frac{W}{\bar \lambda}
\end{align*}
which justifies that linearity assumption.
}

\subsubsection{Optimality of Static Pricing Strategy}\label{sec_semi-static}

The proof of Theorem~\ref{thm_static_opt} relies on another type of pricing strategy: \textit{Semi-Static Pricing Strategy}. Semi-static pricing strategies serve as a bridge to connect static pricing strategies and dynamic pricing strategies in the proof of Theorem~\ref{thm_static_opt}:
\begin{definition}
A \textit{Semi-Static Pricing Strategy} generates a sequence of prices $c_1, c_2, \ldots, c_N$ at the time the tasks are posted to the marketplace. 
The strategy starts off by assigning $c_1$ to all tasks, and once one task is picked up by a worker, the price for all remaining tasks changes to $c_2$, and so on, until all the tasks are picked up by workers and completed. Unlike the static pricing strategy, the sequence of $c_i$'s need not be monotonically decreasing. 
\end{definition}


We next show that the best dynamic pricing strategy is as good (i.e., has as low an expected completion time or latency) as the best semi-static pricing strategy. 
\begin{theorem}\label{thm_opt_semi}
The optimal dynamic pricing strategy to minimize the expected number of worker-arrivals $\mathbb{E}[W]$ is in the form of a semi-static pricing strategy.
\end{theorem}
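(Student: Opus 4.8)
The plan is to re-index the entire decision process by \emph{worker arrivals} rather than by real time, which turns the fixed-budget problem into a time-homogeneous Markov decision process whose optimal policy is, by construction, semi-static. Since the reduction $\mathbb{E}[T\mid S] = k\,\mathbb{E}[W\mid S]$ from the previous section lets me work with $\mathbb{E}[W]$ alone, the first and load-bearing observation is that \emph{time is irrelevant to $\mathbb{E}[W]$}: because $W$ only counts arrivals, and each arriving worker independently accepts the highest-priced task with probability $p(c)$ determined solely by the price $c$ in force at that instant, the realized arrival times (and hence the shape of $\lambda(t)$) never enter the acceptance events. I would therefore condition on the arrival sequence and treat the $k$-th arrival as the $k$-th decision epoch of a discrete process whose state is $n$, the number of outstanding tasks. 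Choosing price $c$ in state $n$ sends $n \to n-1$ with probability $p(c)$ at a payment of $c$, and leaves the state at $n$ with probability $1-p(c)$; absorption at $n=0$ is completion, and $W$ is the number of epochs to absorption. The one-step transition law and the one-step payment depend only on $(n,c)$ and never on real time or the arrival index, so this arrival-indexed MDP is time-homogeneous.

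Next I would cast the problem as a constrained MDP on the finite absorbing state space $\{0,1,\dots,N\}$: minimize the expected number of epochs to absorption $\mathbb{E}[W]$ subject to the expected total payment being at most $B$. A policy that fixes one price per value of $n$, decided up front, is exactly a semi-static strategy (the price stays constant while a given number of tasks remain and changes only upon a completion), so the target is to show such a stationary-in-$n$ policy is optimal. I would establish this by a Lagrangian relaxation: for a multiplier $\mu \ge 0$, minimize $\mathbb{E}[W] + \mu\,\mathbb{E}[\text{payment}]$, an unconstrained total-cost MDP whose Bellman equation at state $n$ collapses to
\[
V(n) \;=\; V(n-1) + \min_{c}\left\{\frac{1}{p(c)} + \mu c\right\}.
\]
The minimizer depends only on the state (here, not at all), so the relaxed optimum is already a semi-static policy. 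Sweeping $\mu$ traces the Pareto frontier between $\mathbb{E}[W]$ and expected payment; choosing the $\mu$ that makes the budget tight, and where an exact match to $B$ is needed using two adjacent prices across the $N$ states, yields a semi-static schedule attaining the constrained optimum. Here I would use the fact that a semi-static strategy has \emph{deterministic} total cost $\sum_i c_i$, so that the $N$ tunable per-state prices can meet the single scalar budget without randomization.

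The main obstacle is the budget coupling across the $N$ phases: a general dynamic strategy may let the price in force while $n$ tasks remain depend on real time, on the rejections seen so far within that phase, or on how much budget has already been spent, and I must rule out any gain from such freedom. Two ingredients neutralize it. The strong Markov property of the thinned arrival process gives that the continuation from state $n$ is statistically independent of how $n$ was reached, which kills the time- and within-phase history dependence; and the Lagrangian relaxation (equivalently, the occupation-measure LP for the MDP together with strong duality) converts the single across-phase constraint into an additive per-epoch cost, after which the per-epoch minimization above shows a single price per state is optimal. The only remaining care is the exact-budget bookkeeping — verifying that a deterministic per-state price schedule, rather than a randomized one, suffices to hit $B$ — which follows from having $N$ free prices against one constraint, and which also sets up the subsequent refinement from semi-static to static.
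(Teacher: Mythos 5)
Your argument is correct in its core mechanism but reaches the theorem by a genuinely different route from the paper's. Both proofs re-index the process by worker arrivals and rest on the same observation: between task completions the decision-relevant state does not change, so a stationary MDP policy cannot change the price mid-phase. Where you diverge is the treatment of the budget. The paper folds the remaining budget $b$ into the state, working with pairs $(n,b)$ under a hard budget constraint; since $b$ also changes only upon a completion, the state stays frozen between completions and semi-staticity falls out with no duality machinery (the price sequence is still computable up front because $b$ evolves deterministically given the earlier prices). You instead keep only $n$ in the state, treat the budget as a single expectation constraint, and invoke a Lagrangian relaxation whose Bellman recursion $V(n)=V(n-1)+\min_c\bigl\{1/p(c)+\mu c\bigr\}$ has a state-independent minimizer. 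This buys you strictly more than the theorem asks: it already yields that an optimal schedule uses at most two prices on the lower convex hull of $(c,1/p(c))$, which is the content of the paper's later Theorem~\ref{thm_lp_properties}, proved there via KKT conditions on the relaxed LP. The cost is that you must lean on strong duality for the constrained MDP and on a de-randomization step, and the latter is your one soft spot: at the optimal multiplier the relaxed problem may have two tied minimizers, and splitting the $N$ phases between them to meet $B$ \emph{exactly} is not always possible with integer allocations --- this is precisely the rounding gap the paper quantifies in Theorem~\ref{thm_lp_guarantee} rather than eliminates. For Theorem~\ref{thm_opt_semi} itself, carrying $b$ in the state as the paper does sidesteps this bookkeeping entirely; as written, your exact-budget claim asserts slightly more than is true, though the conclusion that the price need only change at completions is unaffected.
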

\papertext{The proof of this theorem can be found in the extended technical report~\cite{treport}.} Intuitively, the proof uses decision theory to demonstrate that, for a dynamic strategy, only the decisions made when a task gets completed matter --- otherwise the state of the Markov process stays the same, and need not be changed.
\techreporttext{
\begin{proof}
The optimal dynamic pricing strategy that minimizes the expected number of worker-arrivals $\mathbb{E}[W]$ can be obtained by solving the corresponding Markov Decision Process. Since we are considering the number of worker-arrivals as cost, the MDP can be represented by tuple $(n, b)$ representing the number of remaining tasks and total budget left, and the transition between states are:
\begin{align*}
\mathbf{Pr}\{(n, b) \rightarrow (n-1, b - c)\} = p(c)\\
\mathbf{Pr}\{(n, b) \rightarrow (n, b)\} = 1 - p(c)
\end{align*}
with corresponding cost:
\begin{align*}
\mathbf{Cost}\{(n, b) \rightarrow (n-1, b - c)\} = 1\\
\mathbf{Cost}\{(n, b) \rightarrow (n, b)\} = 1
\end{align*}
where the two state transitions above indicates whether the next arrived worker will accept the task if reward is $c$. 

A special property of this MDP is that each state has only one outgoing transition edge, which corresponds to the event that some worker accepts our task and completes it. Therefore, the MDP formulation indicates that the task reward will remain unchanged in the optimal pricing strategy until some task is completed (since otherwise the state is still the same). In other words, the optimal pricing strategy is in the form of semi-static pricing strategy.
\end{proof}
}

The next theorem states that the effectiveness of any semi-static pricing strategy is not affected by the order of the $c_i$. 
\begin{theorem}\label{thm_eff_semi}
For any semi-static pricing strategy $S$ with price sequence $c_1, c_2, \ldots, c_N$, then the expected number of worker-arrivals $\mathbb{E}[W]$ is equal to $\sum_{i=1}^N \frac{1}{p(c_i)}$.
\end{theorem}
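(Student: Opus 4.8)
The plan is to decompose the total worker-arrival count $W$ according to the $N$ ``phases'' of the semi-static strategy, where phase $i$ is the stretch during which the price $c_i$ is in effect --- that is, after exactly $i-1$ tasks have been completed but before the $i$-th completion. Writing $W_i$ for the number of worker-arrivals that occur during phase $i$, we have $W = \sum_{i=1}^N W_i$, so by linearity of expectation it suffices to show $\mathbb{E}[W_i] = \frac{1}{p(c_i)}$ for each $i$.

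First I would argue that each $W_i$ is a geometric random variable with parameter $p(c_i)$. During phase $i$ every remaining task carries the same reward $c_i$, so by the model of Section~\ref{sec_prob_est} each arriving worker independently chooses to take one of our tasks with probability $p(c_i)$. Phase $i$ ends precisely when the first such worker arrives. Hence $W_i$ counts the number of independent Bernoulli$(p(c_i))$ trials up to and including the first success, which is geometric with mean $1/p(c_i)$.

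Next I would note that the phases chain together cleanly: once a worker accepts during phase $i$, the very next worker-arrival begins phase $i+1$ facing a fresh, independent acceptance decision with probability $p(c_{i+1})$. Because distinct workers make independent choices, the $W_1, \ldots, W_N$ are in fact independent geometric variables; but in any case linearity of expectation already yields $\mathbb{E}[W] = \sum_{i=1}^N \mathbb{E}[W_i] = \sum_{i=1}^N \frac{1}{p(c_i)}$ with no independence assumption required.

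The one real subtlety to get right is the claim that the per-worker acceptance probability during phase $i$ equals exactly $p(c_i)$, independent of how many tasks still remain outstanding. This follows from the discussion preceding the definition of static strategies: since all surviving tasks share the common (hence highest) reward $c_i$, the marketplace dynamics, and therefore the acceptance probability, are governed solely by that reward. The NHPP timing of arrivals plays no role in the argument, since we are counting worker-arrivals rather than elapsed time --- which is precisely what decouples $\mathbb{E}[W]$ from the arrival-rate function $\lambda(t)$.
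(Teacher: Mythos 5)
Your proposal is correct and follows essentially the same argument as the paper: both decompose $W$ into per-phase worker-arrival counts that are geometric with success probability $p(c_i)$ and then apply linearity of expectation (the paper merely separates out the $N$ successful workers, writing $W = \sum_i w_i + N$ with $\mathbb{E}[w_i] = \frac{1-p(c_i)}{p(c_i)}$, which is the same accounting as your ``trials up to and including the first success'' convention).
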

\papertext{The proof of this theorem can also be found in the extended technical report~\cite{treport} as well. The proof sketch is as follows: We decompose $W$ into sum of $n$ independent random variables each representing the number of worker-arrivals between successive tasks. In doing so, we can rewrite the expected value of $W$ as the sum of expected value of each of those $n$ random variables. Finally the expected value of each individual random variable can be calculated directly.}

\techreporttext{
\begin{proof}
Let $w_i$ denotes the number of worker-arrivals between the completion time of $(i - 1)$th and $i$th task ($w_1$ denotes the number of worker-arrivals before the completion time of the $1$st task). Based on the model assumption in Section~\ref{sec_model}, we can derive that $w_i$ follows geometric distribution:
$$\mathbf{Pr}[w_i = k] = (1 - p(c_i))^k p(c_i)$$
where $p(c_i)$ is the task acceptance probability with respect to task reward $c_i$. 

The total number of worker-arrivals $W$ can be expressed as sum of $w_i$s plus $N$ workers that actually picked up tasks: 
$$W = \sum_{i=1}^N w_i + N$$
Taking expectation on both side, we get:
$$\mathbb{E}[W] = \sum_{i=1}^N \mathbb{E}[w_i] + N  = \sum_{i=1}^N \frac{1 - p(c_i)}{p(c_i)} + N = \sum_{i=1}^N \frac{1}{p(c_i)}$$
which finishes the proof.
\end{proof}
}
Thus, by reordering the prices of a semi-static strategy (to ensure a descending order), we can change it into a static strategy with equal total expected completion time or latency. This result together with Theorem~\ref{thm_opt_semi} demonstrates that the static pricing strategies are near-optimal.

\subsection{Nearly Optimal Solution via LPs} \label{sec_lps}

In this section, we will address the problem of finding the optimal static pricing strategy. Suppose in the optimal static pricing strategy, the rewards for tasks are $c_1, c_2, \ldots, c_N$. Using Theorem~\ref{thm_eff_semi}, we know that the expected total number of worker-arrivals $\mathbb{E}[W]$ equals the sum of $\frac{1}{p(c_i)}$ (since any static pricing strategy is also a semi-static pricing strategy with the reward sequence monotonically non-increasing):
\begin{equation}\label{eqn_seq}
\mathbb{E}[W] = \sum_{i=1}^N \frac{1}{p(c_i)}
\end{equation}

Let $n_c$ be the number of tasks with reward $c$, i.e., $n_c = |\{i: c_i = c\}|$.
Then, Equation~(\ref{eqn_seq}) can be rewritten as: $\mathbb{E}[W] = \sum_c n_c \frac{1}{p(c)}$. The $n_c$ values satisfy the following constraints:
\begin{equation}\label{eqn_constrain}
\sum_c n_c = N; \ \ \sum_c n_c \times c \leq B; \ \ n_c \geq 0; \ \ n_c \in \mathbb{N}
\end{equation}
where the first constraint is about the total number of tasks, the second constraint is about the total monetary budget ($B$ denotes the total budget for all tasks).

Our objective is to find values of $n_c$ that minimizes $\mathbb{E}[W]$ while simultaneously satisfying Constraints~(\ref{eqn_constrain}). 
For arbitrary functions $p(c)$, it is easy to show that the optimization problem is {\sc NP-Hard}. Furthermore, we can show that the optimal static pricing strategy solution can be generated using a dynamic-programming based pseudo-polynomial time algorithm:
\begin{theorem}
The $c_i$ for the optimal static pricing strategy can be discovered in {\sc Ptime} $(B, N)$.
\end{theorem}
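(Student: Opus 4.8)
The plan is to recognize the problem --- minimizing $\sum_c n_c / p(c)$ subject to Constraints~(\ref{eqn_constrain}) --- as a knapsack-style integer program and to solve it by dynamic programming over a table indexed by the number of tasks already priced and the budget consumed so far. This mirrors the classical knapsack situation: the problem is {\sc NP-Hard} in general (as just noted), yet admits a pseudo-polynomial-time DP precisely because the prices live on an integer grid. First I would fix the candidate price set: since rewards are integer multiples of a minimal unit (one cent on Mechanical Turk) and no single task can usefully be priced above the whole budget, the feasible prices form a finite set $\mathcal{C}$ with $|\mathcal{C}| = O(B)$ (measuring $B$ in cents). By Theorem~\ref{thm_eff_semi}, the objective for any price multiset $\{c_1,\dots,c_N\}$ equals $\sum_{i=1}^N 1/p(c_i)$, and both this objective and the budget usage $\sum_i c_i$ are additive across tasks.

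Next I would define
$$f(k, b) = \min \Big\{ \sum_{i=1}^k \frac{1}{p(c_i)} : c_i \in \mathcal{C}, \ \sum_{i=1}^k c_i = b \Big\},$$
the least expected number of worker-arrivals achievable by pricing $k$ tasks with total cost exactly $b$, with $f(0,0)=0$ and $f(0,b)=+\infty$ for $b>0$. The additivity gives optimal substructure and hence the recurrence
$$f(k, b) = \min_{c \in \mathcal{C},\, c \le b} \Big[ f(k-1, b - c) + \frac{1}{p(c)} \Big].$$
The optimum of the original problem is $\min_{0 \le b \le B} f(N, b)$, where minimizing over $b$ respects the $\le B$ budget constraint, and the optimal $c_i$ are recovered by backtracking through the arg-min choices. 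Correctness follows by induction on $k$: the base case is immediate, and in the inductive step any optimal $k$-task assignment has a last-priced task whose removal leaves an optimal $(k-1)$-task assignment on the reduced budget $b-c$; since the objective is symmetric in the $c_i$, computing the minimum over sequences is equivalent to computing it over the multisets that the $n_c$ formulation describes.

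The running time is governed by the table size $O(NB)$ times the $O(|\mathcal{C}|)$ work per entry, i.e.\ $O(N B |\mathcal{C}|) = O(N B^2)$, which is polynomial in the magnitudes of $N$ and $B$ and therefore establishes the claimed {\sc Ptime}$(B,N)$ bound. I expect the main obstacle to be conceptual rather than computational: one must argue that tractability genuinely hinges on the discrete price grid together with the dependence on the numeric value of $B$ (consistent with the problem being {\sc NP-Hard} for arbitrary $p(c)$), and one must be careful to minimize over all $b \le B$ rather than forcing $b = B$, since spending the entire budget need not be optimal. A minor accompanying point is to verify that bounding each single-task price by $B$ loses no optimal solution, so that restricting to the finite set $\mathcal{C}$ is without loss of generality.
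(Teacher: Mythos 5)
Your proposal is correct and matches the paper's approach: the paper's (one-line) justification is precisely the knapsack-style dynamic program over ``optimal allocations of up to $B$ to the first $i$ tasks,'' which you have filled out with the recurrence on $f(k,b)$, the backtracking step, and the $O(NB^2)$ bound. No gaps.
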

In short, the idea is to consider all optimal allocations of up to $B$ to the first $i$ tasks, for all $i \in 1\ldots n$.
 
Our approach will instead be to approximately solve the optimization problem. We begin by casting the problem as an Integer Program (IP). Then, we will relax the IP to a Linear Program (LP) where the variables no longer have to be integers, i.e., the $n_c \in \mathbb{N}$ constraints are excluded.  Lastly, we will round up variables in the solution to the LP to make them integers. The relaxed LP version of the problem is as follows:
\begin{align*}
\textit{Minimize }  \sum_c n_c \frac{1}{p(c)} 
\textit{ \ \ s.t. } \sum_c n_c = N; \sum_c n_c \times c \leq B; n_c \geq 0
\end{align*}
Instead of applying an LP solver and then performing rounding, next, we will describe an even faster approach, that leverages a special property of the LP above:
\begin{theorem}\label{thm_lp_properties}
There exists an optimal solution for the LP above which satisfies the following:
\squishlist
\item
$\exists c_1 < c_2$, $\forall c \not= c_1, c \not= c_2, n_c = 0$
\item
$\forall c = t c_1 + (1 - t) c_2, t \in \mathbb{R}$:
$\frac{1}{p(c)} \geq t \frac{1}{p(c_1)} + (1 - t) \frac{1}{p(c_2)}$
\squishend
\end{theorem}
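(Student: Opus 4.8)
The plan is to recast the relaxed LP as a purely geometric problem about the points $P_c := (c,\,1/p(c))$ in the plane, and then read off both bullets from the lower boundary of their convex hull. First I would normalize by setting $x_c = n_c/N$, so that $\sum_c x_c = 1$, $x_c \geq 0$, and $\sum_c x_c\, c \leq B/N$, with objective $\tfrac1N\sum_c x_c/p(c)$. Then $\{x_c\}$ is a probability distribution over prices, and any feasible solution produces the single planar point $\big(\sum_c x_c c,\ \sum_c x_c/p(c)\big)$, which is a convex combination of the $P_c$; conversely every finitely supported convex combination of the $P_c$ is attainable. Hence the set of attainable (average-cost, average-objective) pairs is exactly the convex hull $H$ of $\{P_c\}$, and the LP is equivalent to minimizing the second coordinate over all points of $H$ whose first coordinate is at most $B/N$.

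Next I would locate the optimum on the boundary of $H$. Because $p(c)$ is increasing in $c$, the objective coordinate $1/p(c)$ decreases as the cost coordinate increases, so the minimizer must lie on the lower-left boundary of $H$, i.e.\ on its lower convex envelope. Concretely, the optimal point is the lowest point of $H$ on the vertical line $x = B/N$ when the budget binds, and otherwise it is the global minimizer of the objective over $H$ (the rightmost vertex); either way it sits on the lower convex envelope of the $P_c$.

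The conclusion then follows from a standard fact about planar convex hulls: any point of the lower envelope is a convex combination of at most two vertices, namely the endpoints $c_1 < c_2$ of the envelope edge it lies on (with $c_1 \leq B/N \leq c_2$). Setting $n_{c_1}, n_{c_2}$ proportional to the corresponding barycentric weights and $n_c = 0$ otherwise gives an optimal solution supported on $\{c_1, c_2\}$, which is exactly the first bullet. For the second bullet, observe that since $\overline{P_{c_1}P_{c_2}}$ is an edge of the lower envelope, every $P_c = (c, 1/p(c))$ lies on or above the line $\ell$ through $P_{c_1}$ and $P_{c_2}$. As $\ell$ is affine and passes through both endpoints, its value at $c = t c_1 + (1-t) c_2$ is $t/p(c_1) + (1-t)/p(c_2)$, and the inequality $1/p(c) \geq \ell(c)$ is precisely the claimed chord inequality; it holds for all $t \in \mathbb{R}$ because $\ell$ is a global supporting line from below, not merely a secant of the segment.

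The hard part will be making the ``at most two vertices'' step fully rigorous and reconciling it with the strict inequality $c_1 < c_2$. The reduction to two support points is Carathéodory's theorem specialized to the boundary of a planar convex set; I would justify it either by this geometric route or, equivalently, by a basic-feasible-solution argument (the program has one equality and one active inequality, so a vertex of its feasible polytope has at most two nonzero $n_c$). The genuinely degenerate case is when the optimal point coincides with an envelope vertex, giving a single optimal price; there I would take $c_2$ to be an adjacent envelope vertex and set $n_{c_2} = 0$, so that $c_1 < c_2$ holds while the edge $\overline{P_{c_1}P_{c_2}}$ still supplies the supporting line needed for the second bullet. An equivalent and slicker packaging of the entire argument is via LP duality: the dual seeks an affine function $\ell(c) = \mu - \nu c$ with $\nu \geq 0$ satisfying $\ell(c) \leq 1/p(c)$ for all $c$, and complementary slackness forces $\ell$ to touch $1/p(\cdot)$ exactly at the prices used in the primal optimum, which immediately yields a supporting line through $P_{c_1}$ and $P_{c_2}$ and hence the chord inequality.
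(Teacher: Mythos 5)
Your proof is correct, but it takes a genuinely different route from the paper. The paper starts from an arbitrary optimal solution $n^*$, takes $c_1,c_2$ to be the extremes of its support, redistributes all intermediate mass onto $\{c_1,c_2\}$ with the affine weights $\frac{c_2-c}{c_2-c_1}$, $\frac{c-c_1}{c_2-c_1}$ (which preserves both constraints), and then proves the redistribution does not increase the objective by invoking the KKT conditions of the LP: the stationarity condition $\frac{1}{p(c)}=\mu_c+\lambda_N-c\mu_B$ with $\mu_c\ge 0$ and complementary slackness yields exactly the global supporting line $\frac{1}{p(c)}\ge \lambda_N-c\mu_B$ touching at $c_1,c_2$, from which both bullets follow. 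Your primary argument instead normalizes to a probability distribution, identifies the attainable (cost, objective) pairs with the convex hull of the points $(c,1/p(c))$, and reads off the two-point support and the chord inequality directly from the lower envelope via Carath\'eodory; your closing LP-duality remark is essentially the paper's KKT argument in different clothing. What your geometric route buys is self-containedness (no appeal to KKT machinery) and an explicit treatment of the degenerate case where the optimum sits at a single envelope vertex --- a case the paper's proof silently skips, since its redistribution formulas divide by $c_2-c_1$ and are undefined when $n^*$ is supported on one price. What the paper's route buys is that it works directly with the original variables and makes the link to Algorithm~\ref{alg_static_strategy} (which selects the two hull prices straddling $B/N$) immediate via the multipliers. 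One small point to tighten in your write-up: the sentence deducing that the minimizer lies on the lower-left boundary ``because $1/p(c)$ decreases as the cost coordinate increases'' is unnecessary and slightly misleading --- the minimizer of the second coordinate over $H\cap\{x\le B/N\}$ lies on the lower envelope regardless of monotonicity; monotonicity only determines whether the budget constraint binds.
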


Theorem~\ref{thm_lp_properties} can be intuitively explained using Figure~\ref{fig_lp_illustrate}. We first plot all the pairs $(c, \frac{1}{p(c)})$ in the plane. 
The first property of Theorem~\ref{thm_lp_properties} states that there are at most two $c_i$'s with non-zero $n_{c_i}$; i.e., there are at most two distinct prices $c_1, c_2$ that tasks are set at.
Then, the second property of Theorem~\ref{thm_lp_properties} states that for $c_1$ and $c_2$, there is no other point $(c, \frac{1}{p(c)})$ below the straight line connecting $(c_1, \frac{1}{p(c_1)})$ and $(c_2, \frac{1}{p(c_2)})$. In other words, $(c_1, \frac{1}{p(c_1)})$ and $(c_2, \frac{1}{p(c_2)})$ can only be segments on the \textit{convex hull} of points $(c_i, \frac{1}{p(c_i)})$.
\begin{figure}
\centering
\vspace{-5pt}
\includegraphics[width = 1.8in]{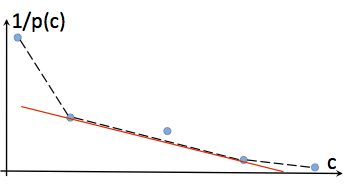}
\caption{Illustration of Theorem~\ref{thm_lp_properties}, which implies that $c_1$ and $c_2$ can only be on the convex hull}
\label{fig_lp_illustrate}
\vspace{-10pt}
\end{figure}

\papertext{The proof of Theorem~\ref{thm_lp_properties} can be found in the extended technical report~\cite{treport}.} The key idea is to show that given any optimal solution, it can be transformed to satisfy the first property while maintaining its optimality. The second property can be derived from the first property and Karush-Kuhn-Tucker conditions~\cite{boyd2004convex} of the LP.
\techreporttext{
\begin{proof}[of Theorem~\ref{thm_lp_properties}]
Suppose $n^*$ is the optimal solution to the above LP problem. Let $c_1 = \min\{c: n^*_c > 0\}$ and $c_2 = \max\{c: n^*_c > 0\}$ be the smallest and biggest index of non-zero component of $n^*$ respectively. We show that the following solution is also an optimal solution:
\begin{align*}
n'_{c_1} = \sum_{c_1 \leq c \leq c_2} n^*_c \frac{c_2 - c}{c_2 - c_1}\\
n'_{c_2} = \sum_{c_1 \leq c \leq c_2} n^*_c \frac{c - c_1}{c_2 - c_1}\\
\forall c \not= c_1, c \not= c_2, n'_c = 0
\end{align*}
In order to prove this claim, we need to show that:
$$\sum_c n^*_c \frac{c_2 - c}{c_2 - c_1} \frac{1}{p(c_1)} + \sum_c n^*_c \frac{c - c_1}{c_2 - c_1} \frac{1}{p(c_2)} \leq \sum_c n^*_c \frac{1}{p(c)}$$
It suffices to prove that:
\begin{equation}\label{eqn_kkt_cond}
\forall c, \frac{1}{p(c)} \geq \frac{c_2 - c}{(c_2 - c_1)p(c_1)} + \frac{c - c_1}{(c_2 - c_1)p(c_2)}
\end{equation}
In order to prove Equation~\ref{eqn_kkt_cond}, we examine the Karush-Kuhn-Tucker conditions~\cite{boyd2004convex} of this LP:
$$\forall c, \frac{1}{p(c)} = \mu_c + \lambda_N - c \mu_B, \mu_c \geq 0, \mu_c n^*_c = 0$$
where $\mu_c, \lambda_N, \mu_B$ are KKT multipliers. Since $\mu_c \geq 0$, it follows
\begin{equation}\label{eqn_kkt_imp}
\forall c, \frac{1}{p(c)} \geq \lambda_N - c \mu_B
\end{equation}
with equality holds on $c = c_1$ and $c = c_2$ (since $n^*_{c_1}, n^*_{c_2} > 0$ implies $\mu_{c_1} = \mu_{c_2} = 0$). Substitute Equation~(\ref{eqn_kkt_imp}) into Equation~(\ref{eqn_kkt_cond}) completes the proof of the first part. The second claim is a direct implication of Equation~(\ref{eqn_kkt_imp}) with equality holds on $c=c_1, c_2$.
\end{proof}
}

Using Theorem~\ref{thm_lp_properties}, we can derive an algorithm (Algorithm~\ref{alg_static_strategy}) to find a nearly optimal pricing strategy. The algorithm generates the convex hull using all possible prices, and then picks the two most suitable prices to assign to tasks. 

Theorem~\ref{thm_lp_guarantee} provides an upper bound of the difference between rounded-LP solution (i.e., the solution provided by Algorithm~\ref{alg_static_strategy}) and optimal solution of original IP problem.

\begin{algorithm}
\scriptsize
\caption{Find Optimal Static Pricing Strategy}
\label{alg_static_strategy}
\begin{algorithmic}
\Function{FindOptimalStaticStrategy}{}
	\For{$c = 0$ to $C$}
		\State Calculate the value of task acceptance probability $p(c)$.
	\EndFor
	\State $CH \leftarrow \textbf{Convex hull of points } (c, \frac{1}{p(c)})$
	\State $c_1 \leftarrow \max\{c \in CH : c \leq \frac{B}{N}\}$
	\State $c_2 \leftarrow \min\{c \in CH : c > \frac{B}{N}\}$
	\State $n_1 \leftarrow \lceil \frac{c_2 N - B}{c_2 - c_1} \rceil$, $n_2 \leftarrow N - n_1$
	\State \Return $n_1$ tasks priced at reward $c_1$; $n_2$ tasks at reward $c_2$.
	\EndFunction
\end{algorithmic}
\end{algorithm}
\begin{theorem}\label{thm_lp_guarantee}
Let $\{n^*\}$ denote the optimal solution that minimizes $\mathbb{E}[W]$ under the Constraint~(\ref{eqn_constrain}), and $\{\hat{n}\}$ denote the rounded-LP solution from Algorithm~\ref{alg_static_strategy}, then the expected total latency difference between two solutions is bounded by:
{\scriptsize
$$ \sum_c \hat{n}_c \frac{1}{p(c)} \leq \sum_c n^*_c \frac{1}{p(c)} + (\frac{1}{p(c_1)} - \frac{1}{p(c_2)})$$
}
\end{theorem}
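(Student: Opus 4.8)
The plan is to compare the rounded output $\{\hat n\}$ of Algorithm~\ref{alg_static_strategy} not directly with the integer optimum $\{n^*\}$, but to route the argument through the fractional LP optimum, exploiting that the LP is a relaxation of the IP. First I would invoke Theorem~\ref{thm_lp_properties} to fix an optimal LP solution supported on exactly two prices lying on the convex hull of the points $(c,\frac{1}{p(c)})$. I would then argue that, because $\frac{1}{p(c)}$ is strictly decreasing in the reward $c$, the budget constraint is tight at the optimum, and the two support prices must be precisely the hull vertices $c_1 \le B/N < c_2$ bracketing the per-task budget $B/N$ that the algorithm selects. Solving the two equations $n_1 + n_2 = N$ and $n_1 c_1 + n_2 c_2 = B$ then yields the fractional counts $n^{LP}_1 = (N c_2 - B)/(c_2 - c_1)$ and $n^{LP}_2 = (B - N c_1)/(c_2 - c_1)$, which are exactly the quantities the algorithm rounds.

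Next, since every feasible integer solution is also LP-feasible, the LP optimum lower-bounds the IP optimum, i.e. $\sum_c n^{LP}_c \frac{1}{p(c)} \le \sum_c n^*_c \frac{1}{p(c)}$. It therefore suffices to bound the gap introduced by rounding. The algorithm outputs $\hat n_1 = \lceil n^{LP}_1 \rceil$ and $\hat n_2 = N - \hat n_1$; writing $\delta = \hat n_1 - n^{LP}_1 \in [0,1)$, this shifts exactly $\delta$ units of mass from the expensive price $c_2$ to the cheap price $c_1$. I would first verify feasibility: the count and nonnegativity constraints hold by construction, and since mass flows toward the cheaper price the total expenditure only decreases, so the budget constraint is preserved.

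Finally, I would compute the objective change exactly. Because $\hat n_2 - n^{LP}_2 = -\delta$, the difference telescopes to $\delta\left(\frac{1}{p(c_1)} - \frac{1}{p(c_2)}\right)$, and since $\delta < 1$ and $\frac{1}{p(c_1)} > \frac{1}{p(c_2)}$ this is at most $\frac{1}{p(c_1)} - \frac{1}{p(c_2)}$. Chaining this with the LP lower bound gives the claimed inequality $\sum_c \hat n_c \frac{1}{p(c)} \le \sum_c n^*_c \frac{1}{p(c)} + (\frac{1}{p(c_1)} - \frac{1}{p(c_2)})$. The main obstacle is the first step, namely rigorously identifying the two support prices of the LP optimum with the hull vertices $c_1, c_2$ chosen by the algorithm and confirming that the budget constraint is tight; this is exactly where the convexity content of Theorem~\ref{thm_lp_properties} (no hull point lies below the chord through $c_1$ and $c_2$) together with the monotonicity of $\frac{1}{p}$ is required. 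Everything after that reduces to elementary bookkeeping.
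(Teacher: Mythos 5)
Your proposal is correct and follows essentially the same route as the paper's proof: bound the IP optimum from below by the LP optimum (since relaxation only removes constraints), then bound the rounding loss by $\frac{1}{p(c_1)} - \frac{1}{p(c_2)}$ using the two-price support structure from Theorem~\ref{thm_lp_properties}. You simply fill in details the paper leaves implicit, namely the identification of the LP optimum's support with the hull vertices bracketing $B/N$ and the explicit $\delta$-bookkeeping for the rounding step.
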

\techreporttext{
\begin{proof}
Since relaxation only removed integer restriction (without adding any constraints), it implies that $\{n^*\}$ is also a valid solution to the relaxed LP problem. Therefore, the optimal LP solution $n^{L}$ will achieve lower objective function value than $\{n^*\}$:
$$ \sum_c n^{L}_c \frac{1}{p(c)} \leq \sum_c n^*_c \frac{1}{p(c)} $$
Since $\hat{n}$ is just the rounded-solution of $n^L$, together with the special form of $n^L$ implied by Theorem~\ref{thm_lp_properties}, we get:
$$ \sum_c \hat{n}_c \frac{1}{p(c)} \leq \sum_c n^{L}_c \frac{1}{p(c)} + (\frac{1}{p(c_1)} - \frac{1}{p(c_2)})$$
Combining two results completes the proof.
\end{proof}
}


\section{Experiments}\label{sec_experiments}

The goals of our experimental evaluation are two-fold: (a) to validate the pricing model assumptions we made in the previous sections, and (b) to compare our techniques versus others on simulations based on real crowdsourcing marketplace data, as well as real experiments deployed on a 
crowdsourcing marketplace.
In Section~\ref{sec_exp_tap}, we examine the validity of the task acceptance probability equation (Equation~(\ref{eqn_accept_prob})) and estimate the typical task acceptance probability  values for real tasks. 
In Section~\ref{sec_exp_fixed_deadline}, we examine the effectiveness (in terms of total monetary cost) of our techniques for the fixed deadline problem from Section~\ref{sec_fix_deadline} as compared to other schemes under simulations based on real workloads from Amazon's Mechanical Turk Marketplace via the mturk-tracker website~\cite{mturk-tracker}. We also study the sensitivity of our techniques with respect to (a) the algorithm parameters, (b) the estimation error of arrival-rate, and (c) the task acceptance probability mapping function, since many of these parameters may only be estimated approximately. 
In Section~\ref{sec_exp_mturk}, we deploy our pricing technique for the fixed deadline problem from Section~\ref{sec_fix_deadline} on Mechanical Turk and report effectiveness in practice. 
(In \techreporttext{Section~\ref{exp_mturk_data_analysis}, we}\papertext{the extended technical report, we} present some data analysis of the data collected as a result.)
In \techreporttext{Section~\ref{sec_exp_static}, we}\papertext{the extended technical report, we additionally} examine the completion times of our techniques for the fixed budget problem from Section~\ref{sec_fixed_budget} under simulations based on real workloads.

\subsection{Task Acceptance Probability}\label{sec_exp_tap}

In Section~\ref{sec_prob_est}, we used Equation~(\ref{eqn_accept_prob}) to map task rewards to task acceptance probabilities. In this section, we experimentally validate Equation~(\ref{eqn_accept_prob}) 
using utility theory (Section~\ref{sec_exp_utility}) and estimate the parameters in Equation~(\ref{eqn_simplified}) for real tasks (Section~\ref{sec_exp_coefficient}).

\subsubsection{Utitity-based Simulation}\label{sec_exp_utility}

As described in Section~\ref{sec_prob_est}, workers choose tasks to work on by maximizing their gain in utility. In this section we simulate a specific workers' choice based on utility theory to justify the form of Equation~(\ref{eqn_accept_prob}). 

The experiment settings are the following:
\begin{denselist}
\item
The total number of tasks on the marketplace is set to be $100$.
\item
The worker's utility estimate $U_i$ for task $T_i(i > 1)$ follows a normal distribution $\mathcal{N}(\mu_i, \sigma_i^2)$, where $\mu_i$ are sampled independently from the normal distribution $\mathcal{N}(0, 1)$, and $\sigma_i$ are sampled independently from the uniform distribution $U[0, 1]$.
\item
The worker's utility estimate $U_1$ for our target task $T_1$ follows a normal distribution $\mathcal{N}(\mu_1 = \frac{c}{50} - 1, \sigma_1^2)$ where $c$ denotes the task reward of our task $T_1$ and $\sigma_1$ is sampled from the uniform distribution $U[0, 1]$. 
\end{denselist}
For a given $c$ (i.e., the reward for our task), we repeatedly sample the utility estimates for each of the $100$ tasks as described above, and assume that the worker will choose our task if and only if our task has the highest utility among all the tasks in the marketplace. 
This sampling process gives us an estimate of the task acceptance probability $p$ for a fixed reward $c$.
We then repeat this process for different values of $c$, and plot the
simulated acceptance probability $p$ over different values of $c$ in 
Figure~\ref{fig_acc_synthetic}. 
In the figure, we also depict the corresponding regression curve based on Equation~(\ref{eqn_accept_prob}) for comparison (the value of $\beta$ is learned by fitting the simulated task acceptance probability value). 
As can be seen in Figure~\ref{fig_acc_synthetic}, the simulated acceptance probability $p$ is well predicted by Equation~(\ref{eqn_accept_prob}). This justifies the model assumption that $p$ is proportional to the exponential of the task utility $U_i$. 

\begin{figure}[h]
\vspace{-5pt}
\centering
\includegraphics[width = 2in]{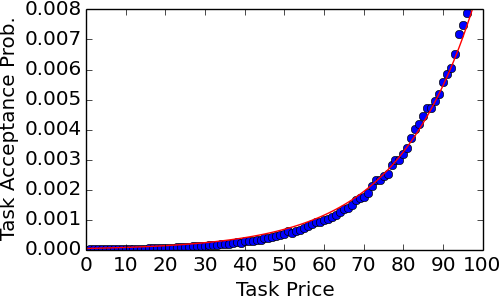}
\vspace{-5pt}
\caption{Simulated task acceptance probability $p$ with reward $c$ ranging from $0$ to $100$. Blue dots are simulation results and red curve is the regression function based on Equation~(\ref{eqn_accept_prob}) with $z_i = \mu_i$ and $\beta = 2.6$. }
\label{fig_acc_synthetic}
\vspace{-10pt}
\end{figure}

\subsubsection{Real World Data}\label{sec_exp_coefficient}

In Section~\ref{sec_prob_est}, we use Equation~(\ref{eqn_simplified}) as parametric form of task acceptance probability function.
In this section, we aim to estimate the typical values of parameters $s$, $b$, $M$ in Equation~(\ref{eqn_simplified}) for tasks on a real marketplace.

We retrieved the snapshots of Amazon Mechanical Turk~\cite{mturk} from mturk-tracker~\cite{mturk-tracker}. The snapshots of the marketplace are taken every 20 minutes; we
estimate the number of tasks that are completed every $20$ minutes by subtracting the number of remaining tasks in each task group (note that in Mechanical Turk a task is called a HIT and a group or batch of tasks is called a HIT group).
If the number of remaining tasks increased during the $20$ minute window(i.e., the task requester added new tasks to this task group), we simply assume no tasks was completed during that $20$ minutes.
\eat{However, this estimate may not be perfectly accurate because we find that sometimes the number of remaining tasks has increased in the next snapshot. This sort of situation arises when a task requester has added new tasks to an existing task group in between two snapshots. In that case, the number of tasks completed in that time period will not be available and we simply would assume it to be $0$.} 

We sampled $100$ task groups that had at least $50$ tasks completed (we enforced this threshold to filter out spam tasks) from 1/1/2014---1/28/2014, and for each task group we manually estimated the approximate average time usage for completing one task. Figure~\ref{fig_amt_hits} shows for the two most popular task types, the \textit{wage per second} versus average completed \textit{workload per hour}, defined as:
\begin{align*}
\scriptsize
\textbf{workload per hour} = \textbf{average \# of completed tasks per hour}\\ \scriptsize \times \ \textbf{average time usage of each task}
\end{align*}
We use these two values as axes because we want to make sure that our figure is invariant under task bundling. (In  Mechanical Turk, requesters often group several tasks into one larger task.) 

\begin{figure}[h]
\vspace{-5pt}
\centering
\subfigure[Categorization]{\label{fig_amt_cate} \includegraphics[width = 1.6in]{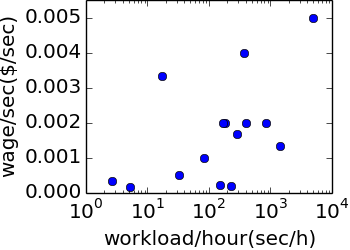}}
\subfigure[Data Collection]{\label{fig_amt_dc} \includegraphics[width = 1.6in]{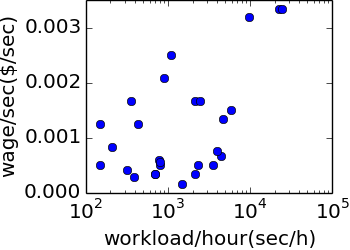}}
\vspace{-10pt}
\caption{The plot of tasks in Amazon Mechanical Turk, x axis represents the wage per second(\$/sec), y axis represents the average completed workload per hour(sec/h) 
}
\label{fig_amt_hits}
\vspace{-10pt}
\end{figure}

In order to estimate the value of parameters $s$, $b$, $M$ in Equation~(\ref{eqn_simplified}), we assume that the utility of each task equals the logarithm of \textit{workload per hour}, as implied by Equation~(\ref{eqn_accept_prob}) if we assume the sum of the exponential of the utilities of all tasks is a constant. We further assume that the utility of each task is linearly correlated with the \textit{wage per second} attribute:
$$\log \textbf{workload/hour} = \textbf{utility} = \alpha \times \textbf{wage/sec} + b + \epsilon$$
where $b$ is task-type bias term and $\epsilon$ accounts for all other factors affecting utility.
We then apply Least Square Regression to estimate the linear coefficient $\alpha$ and the bias term $b$ for each task type. 

Table~\ref{tbl_linear_regression} shows the result of Least Square Regression. The two linear coefficients are approximately the same, implying that the linear coefficient of the \textit{wage per second} attribute is the same for all task types. The bias term of Data Collection tasks is significantly higher than Categorization tasks, implying that workers in Mechanical Turk prefer Data Collection tasks to Categorization tasks.

\begin{table}[h]
\vspace{-5pt}
\scriptsize
\centering
\begin{tabular}{|c|c|c|}
\hline
 & Linear coefficient & Bias\\
\hline 
Categorization & 748 & 3.66\\
\hline 
Data Collection & 809 & 6.28\\
\hline
\end{tabular}
\vspace{-5pt}
\caption{Linear coefficients and bias terms generated using Least Square Regression}
\label{tbl_linear_regression}
\vspace{-10pt}
\end{table}

Using the results in Table~\ref{tbl_linear_regression}, we can then estimate the parameters $s$, $b$, $M$ in Equation~(\ref{eqn_simplified}). Say our task is a Data Collection task and the average completion time of our task is $120$ seconds, then based on Table~\ref{tbl_linear_regression}, we have (task reward $c$ is in cents):
{\scriptsize
\begin{align*}
\textbf{workload per hr.} & = \exp\{809 \times \frac{c}{100} \times \frac{1}{120} + 6.28\}
 = \textbf{total} \times p(c) \times 120
\end{align*}
}
where $\mathbf{total}$ denotes the total number of tasks completed per hour in the crowdsourcing marketplace (including all other tasks). In Mechanical Turk we have $\mathbf{total} \approx 6000$ (as seen in mturk-tracker data). Using this fact we derive the following expression for $p(c)$:
\begin{equation}\label{eq_expt_pc}
p(c) \approx \frac{\exp\{\frac{c}{15} + 0.39\}}{\exp\{\frac{c}{15} + 0.39\} + 2000}
\end{equation}

\subsection{Fixed Deadline Pricing Simulation}\label{sec_exp_fixed_deadline}

In this section, we examine the effectiveness of the dynamic pricing strategy in Section~\ref{sec_fix_deadline}. We compare our dynamic pricing strategy against the binary-search-based fixed pricing strategy in Faridani's work~\cite{faridani11}. We first compare the two pricing strategies under a realistic crowdsourcing workload in Section~\ref{sec_exp_simple}. We then study the trend of relative reduction in cost of our dynamic pricing strategy compared to Faridani's fixed pricing strategy under different problem settings in Section~\ref{sec_exp_trend}. We further examine the sensitivity of dynamic pricing strategy to the granularity of time discretization in Section~\ref{sec_exp_time_interval}. The sensitivity of both pricing strategies to the estimations of task acceptance probability function and future arrival-rates are examined in Section~\ref{sec_exp_para_estimation} and Section~\ref{sec_exp_arrival_rate} respectively.

In the following experiments, we assume the following default settings unless explicitly stated:
\begin{denselist}
\item
The total number of tasks $N = 200$.
\item
The total time before deadline $T = 24 \text{ hours}$.
\item
We retrieved the number of tasks completed during every 20 minutes interval for the time period from 1/1/2014 to 1/28/2014 from mturk-tracker as described in Section~\ref{sec_exp_coefficient}. The worker arrival rate $\lambda(t)$ is set to be piecewise constant on every such $20$ minute time interval, i.e., for each time interval, $\lambda(t)$ is set to match the retrieved arrival data.
\item
Our target task is assumed to be a Data Collection task with an average completion time of $2$ minutes. The mapping function between task reward $c$ and task acceptance probability $p$ can be derived as in Equation~\ref{eq_expt_pc} in 
Section~\ref{sec_exp_coefficient}.
\item
The dynamic pricing model is trained using a time interval of length $20$-minutes.
\end{denselist}

\subsubsection{Effectiveness under a Realistic Workload}\label{sec_exp_simple}

In this section we examine the effectiveness of our dynamic pricing strategy under a realistic workload derived from mturk-tracker. 

We compare our dynamic pricing strategy with fixed price strategies which assigns a fixed reward to all tasks in advance, determined using binary search, and does not change the reward afterwards. Figure~\ref{fig_fixed_deadline} shows the results of our experiment: for various values of the average reward (y axis), we plotted the expected number of tasks that remain unsolved at the deadline (x axis)---the total reward can be estimated by multiplying the average reward with the number of tasks. The $\mathbf{Penalty}$ parameter (see Section~\ref{sec_mdp}) is set in our dynamic pricing strategy such that the expected number of remaining tasks matches those of the fixed pricing strategy.



\begin{figure}[h]
\vspace{-5pt}
\centering
\subfigure{\label{fig_fixed_deadline} \includegraphics[width = 1.6in]{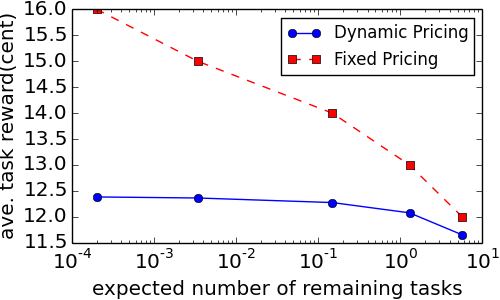}}
\subfigure{\label{fig_trend_N_T} \includegraphics[width = 1.6in]{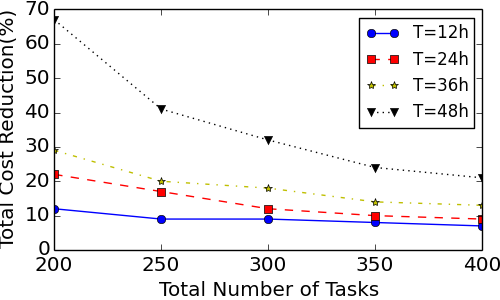}}
\vspace{-5pt}
\caption{(a) Simulated average task reward $c$ of our dynamic pricing strategy with respect to different threshold for the expected number of remaining tasks after deadline. (b) Percentage Cost Reduction with various settings of $N$ and $T$ }
\end{figure}

From Figure~\ref{fig_fixed_deadline}, we see that with low expected number of remaining tasks after deadline (less than $1$ remaining tasks on average), the dynamic pricing strategy achieves an average task reward between $12$ and $12.5$. In fact, we can show that this average reward is very close to the theoretical lower bound of average task reward $c_0$ for any pricing strategy, which satisfies the following equation:
$p(c_0) = \frac{N}{\int_0^T \lambda(t) dt}$.
In our experiment, $c_0 \approx 12$.

The task reward $c_0$ has the following intuitive meaning: Suppose that we have an infinite number of tasks that can be picked up by workers. Let $X$ denotes the number of tasks completed before deadline. Then $c_0$ is the minimum task reward such that: $\mathbb{E}[X] \geq N$.
However, in practice we want to complete all $N$ tasks before deadline, which is equivalent to: $\mathbf{Pr}(X \geq N) \approx 1$.
Note that any pricing strategy satisfying the second constraint will automatically satisfy the first constraint. 
Therefore, in order to achieve a high probability guarantee, the average task reward will necessarily be higher than $c_0$ (since $c_0$ is the minimal possible task reward to satisfy the first constraint of $\mathbb{E}[X] \geq N$). 

Figure~\ref{fig_fixed_deadline} shows that our dynamic pricing strategy can finish all tasks by the deadline with very high probability (99.9\%) and only 3\% overhead (as compared to $c_0$). On the other hand, for the fixed reward pricing strategy from~\cite{faridani11}, the task reward needs to be set at $16$ to achieve the same guarantee, resulting in a 33\% increase over our dynamic pricing strategy, a significant difference in cost.

\subsubsection{Trends of Effectiveness}\label{sec_exp_trend}

In this section, we examine the relative gain of the dynamic pricing strategy compared to a fixed pricing strategy under various settings. We compute the cost reduction achieved by using the dynamic pricing strategy instead of the fixed pricing strategy (in percentage), and study how the reduction changes when the parameters are varied. The experiment settings are listed below:
\begin{denselist}
\item
We study the relative effectiveness of dynamic pricing strategy with respect to the above five parameters of the experiment: $N$, $T$, and three parameters $s$, $b$, $M$ in Equation~(\ref{eqn_simplified}).
\item
Each time we vary only one experiment parameter ($N$, $T$, $s$, $b$, $M$) while keeping other parameters fixed. The default value of these experiment parameters are: $N = 200$, $T = 24$ hours, $s = 15$, $b = -0.39$, $M = 2000$ (same as before). 
\item 
We will compute the total cost of all tasks using both pricing strategies. Let $c_d$ and $c_f$ be the total cost of the dynamic pricing strategy and the fixed pricing strategy respectively, the percentage cost reduction $r$ is defined as:
$r = \frac{c_f - c_d}{c_f}$.
The percentage cost reduction $r$ serves as a measure of the effectiveness of the dynamic pricing strategy as compared to the fixed pricing one. 
\item
For both the dynamic pricing strategy and the fixed pricing strategy, the task reward is chosen such that all tasks are finished by the deadline with $99.9$\% confidence. This will be the default setting for the following experiments.
\end{denselist}
Figure~\ref{fig_trend_N_T} shows the percentage cost reduction under various settings of $N$ and $T$. The experiment shows that the percentage cost reduction decreases as $N$ increases and increases as $T$ increases. Therefore, if we have less number of tasks and time before deadline is longer, then the gain of the dynamic pricing strategy is higher. On the other hand, the gain of the dynamic pricing strategy is lower if we want to complete more tasks in a shorter period of time. The intuitive explanation for this behavior is that with a longer period of time we have the ability to plan ahead and vary the price over time to get additional monetary cost savings. 


Figure~\ref{fig_trend_s_b_M}(a)--(c) shows the trend of percentage cost reduction when the parameter value of $s$, $b$, $M$ changes. The implications can be summarized as follows: 
\begin{denselist}
\item The gain of the dynamic pricing strategy is stable no matter how much the task acceptance probability $p$ is sensitive to task reward $c$ (Figure~\ref{fig_reduction_scale}); 
\item The gain is lower if the task content is intrinsically more attractive compared to other tasks (Figure~\ref{fig_reduction_bias}); 
\item The gain is higher if there are less tasks in the crowdsourcing marketplace than average (Figure~\ref{fig_reduction_total}). 
\end{denselist}

\begin{figure}[h]
\centering
\vspace{-5pt}
\subfigure[Cost Reduction w.r.t. $s$]{\label{fig_reduction_scale} \includegraphics[width = 1.6in]{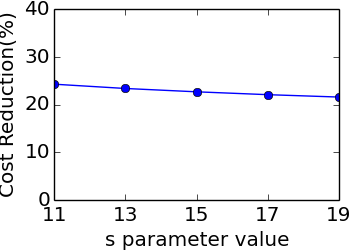}}
\subfigure[Cost Reduction w.r.t. $b$]{\label{fig_reduction_bias} \includegraphics[width = 1.6in]{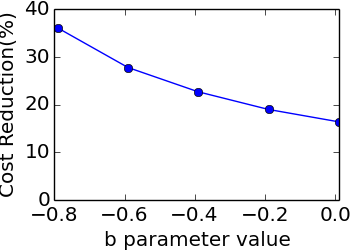}}
\subfigure[Cost Reduction w.r.t. $M$]{\label{fig_reduction_total} \includegraphics[width = 1.4in]{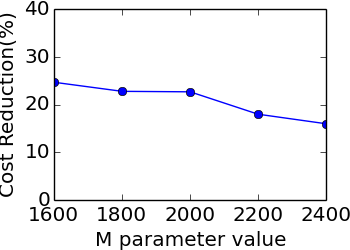}}
\subfigure[Average Task Price for Time Granularities]{\label{fig_granularity} \includegraphics[width = 1.65in]{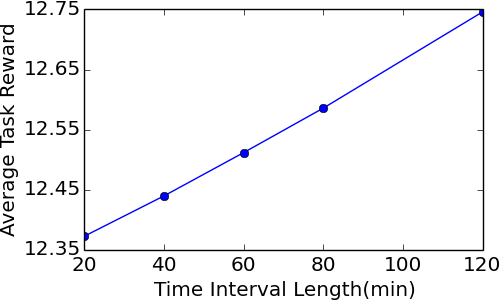}}
\vspace{-10pt}
\caption{(a -- c) Percentage Cost Reduction on varying $s$, $b$, $M$ (d) Task price variation with Granularity}
\label{fig_trend_s_b_M}
\vspace{-12pt}
\end{figure}

\subsubsection{Granularity of Time Interval}\label{sec_exp_time_interval}

In this section, we examine the effects of different time interval granularities. We train our dynamic pricing strategy using different time interval lengths, and examine the corresponding trade-off between effectiveness of pricing strategy and training time. 
The length of time interval used for training the dynamic pricing strategy ranges from $20$ minutes to $2$ hours.

Intuitively, the average task price should increase as the length of time interval increases since the strategy space is reduced; our experiment results in Figure~\ref{fig_granularity} depict the expected behavior: The average task price increases steadily (but not by too much) as the length of time interval increases. On the other hand, the algorithm running time is rather stable and is not affected by the length of time interval (the algorithm running time is between $4$ seconds and $5$ seconds for all experiments, by executing Python code on a laptop with an Intel i7 processor). The stable behavior of running time is probably because of the Poisson truncation technique in Section~\ref{sec_speedup}: the expected number of workers arriving into the marketplace during each time interval will decrease as the length of time interval decreases, and the corresponding Poisson truncation threshold will also decrease. These results argue for using as small a time interval for which we can reliably obtain $\lambda(t)$ data.

\subsubsection{Sensitivity of Parameter Estimation}\label{sec_exp_para_estimation}

Our dynamic pricing strategy (as well as Faridani's fixed pricing strategy~\cite{faridani11}) requires estimation about the task acceptance probability mapping function $p(c)$ as input. However, these estimates may sometimes not be perfectly accurate. In this section, we examine the sensitivity of our pricing strategy to the estimation accuracy.


We train our dynamic pricing strategy under an inaccurate estimate of $p(c)$, and test it using the real value of $p(c)$. The task acceptance probability function is as Equation~\ref{eq_expt_pc}. For each experiment, we vary one parameter of the real $p(c)$ to examine the robustness of our dynamic pricing strategy. The estimation of other parameters are assumed to be accurate.

\begin{figure}[t]
\vspace{-5pt}
\centering
\subfigure[Remaining \# of tasks w.r.t. $s$]{\label{fig_remain_scale} \includegraphics[width = 1.6in]{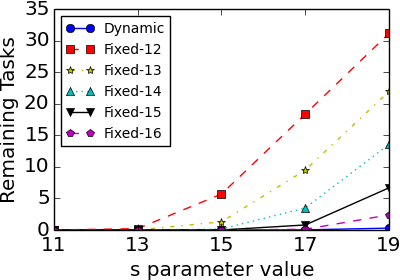}}
\subfigure[Average task reward w.r.t. $s$]{\label{fig_reward_scale} \includegraphics[width = 1.6in]{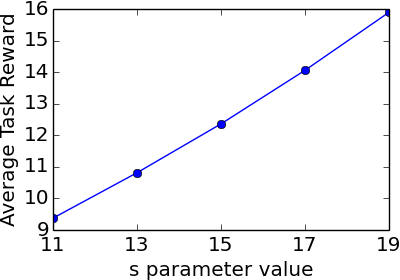}}
\vspace{-5pt}
\subfigure[Remaining \# of tasks w.r.t. $b$]{\label{fig_remain_bias} \includegraphics[width = 1.6in]{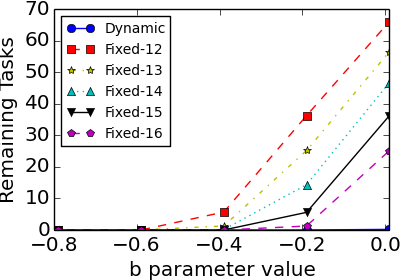}}
\subfigure[Average task reward w.r.t. $b$]{\label{fig_reward_bias} \includegraphics[width = 1.6in]{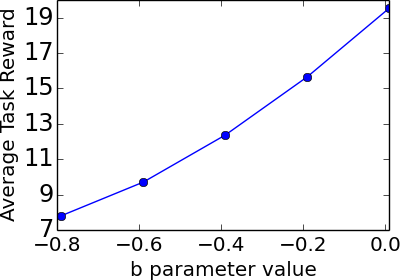}}
\subfigure[Remaining \# of tasks w.r.t. $M$]{\label{fig_remain_total} \includegraphics[width = 1.6in]{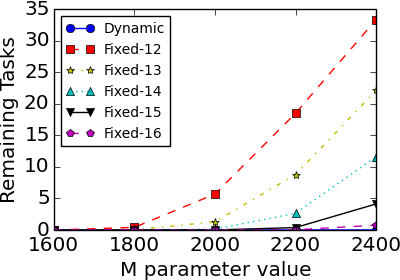}}
\subfigure[Average task reward w.r.t. $M$]{\label{fig_reward_total} \includegraphics[width = 1.6in]{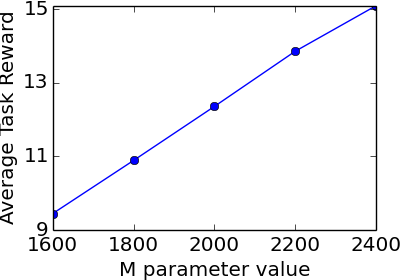}}
\vspace{-10pt}
\caption{Simulated average number of remaining tasks under inaccurate parameter estimation of $p(c)$ for dynamic pricing strategy and fixed pricing strategy(left) and average task reward for dynamic pricing strategy(right)}
\label{fig_sensitivity_parameter}
\vspace{-15pt}
\end{figure}

Figure~\ref{fig_sensitivity_parameter} shows the average number of remaining tasks (left figures) and average task reward (right figures) for our dynamic pricing strategy with respect to different values of parameters $s$, $b$, $M$ for the real $p(c)$. We focus first on the left figures, indicating the average number of remaining tasks. The data for the fixed pricing strategy (for various values of fixed price --- $12 \ldots 16$) is also added for comparison. Here, unlike the fixed price strategies that all have non-zero remaining tasks, the dynamic pricing strategy curve is not visible because the number of remaining tasks is very close to zero. {\em Thus, 
we see that our dynamic pricing strategy is much more robust under inaccurate parameter estimation compared to fixed pricing strategy: it returns 0 remaining tasks with very high probability, while the fixed pricing strategy completely fails to finish all the tasks on time.} The right figures (depicting only the dynamic pricing strategy) show how the dynamic pricing stays robust: as the parameters are increased, even though the dynamic pricing strategy has been learned on incorrect parameters, it automatically increases the task reward as necessary. 

%

%

\subsubsection{Sensitivity of Arrival-Rate Prediction}\label{sec_exp_arrival_rate}

Our dynamic pricing strategy and the fixed pricing strategy~\cite{faridani11} both require the prediction of future worker arrival rate. There will be some inevitable discrepancy between the predicted and actual arrival-rates because of the intrinsic variations in arrival-rate. In this section, we examine the stability of our dynamic pricing strategy against such discrepancies.

We divide the historical arrival-rate data retrieved from mturk-tracker into two separate parts: one for training and the other for testing. We train our pricing strategy using the training arrival-rate data, but apply it on the test arrival-rate data.
This way, we allow the algorithms to predict the general trend of the arrival-rate; however the algorithms will not be aware of the actual arrival-rate. We test our pricing strategies on $4$ different days in year 2014: 1/1, 1/8, 1/15, 1/22. The training arrival-rate is the average arrival-rate of the other $3$ days.
\begin{figure}[t]
\vspace{-5pt}
\centering
\subfigure[Average remaining \# of tasks for different testing days]{\label{fig_time_remain} \includegraphics[width = 1.6in]{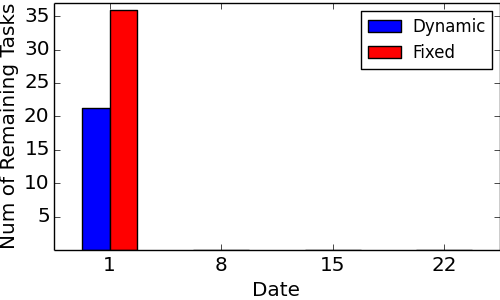}}
\subfigure[Average task reward for different testing days]{\label{fig_time_reward} \includegraphics[width = 1.6in]{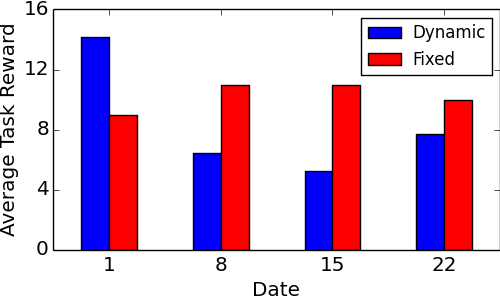}}
\subfigure[The actual arrival rate and training arrival rate for 1/1/2014]{\label{fig_arrival_101} \includegraphics[width = 1.6in]{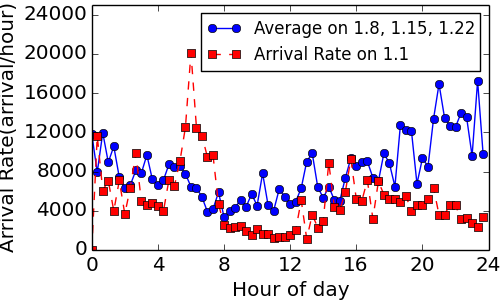}}
\subfigure[The actual arrival rate and training arrival rate for 1/22/2014]{\label{fig_arrival_122} \includegraphics[width = 1.6in]{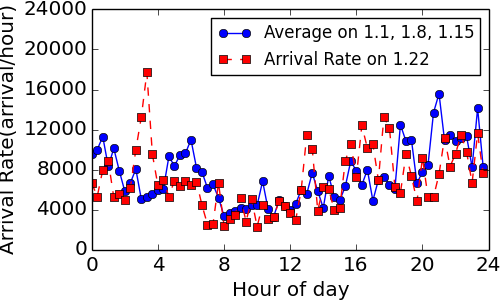}}
\vspace{-10pt}
\caption{Testing Sensitivity of Arrival Rates}
\label{fig_sensitivity_arrival_rate}
\vspace{-10pt}
\end{figure}

\eat{The confidence of completing all tasks is set to be $99.9$\% for both pricing strategies during training.} Figure~\ref{fig_time_remain} and~\ref{fig_time_reward} show the experiment results of the average number of remaining tasks and the average task reward respectively. As can be seen in the figures, both pricing strategies are relatively stable except for 1/1.

The surprising results for 1/1 can be explained by comparing Figure~\ref{fig_arrival_101} and~\ref{fig_arrival_122}. Figure~\ref{fig_arrival_122} shows the training arrival-rate and testing arrival-rate for 1/22: the training data is mostly in accordance with testing data, except that there are several random spikes in the testing data. The experiment results show that both pricing strategies are relatively stable to this kind of prediction error. On the other hand, Figure~\ref{fig_arrival_101} depicts a consistent deviation between training data and testing data on 1/1, in which case both pricing strategies performs poorly. Such a consistent deviation is probably due to the special date of 1/1, so these deviations shouldn't occur very frequently in practice. Naturally, the prediction of arrival-rate on special days is hard to do because it does not follow a normal weekday pattern. 
As a result, adaptive prediction techniques such as predicting the arrival-rate in next few hours based on arrival-rate in last few hours could be useful in such cases. We leave exploration of such adaptive schemes for future work.





\techreporttext{
\subsection{Fixed Budget Pricing Simulation}\label{sec_exp_static}

In this section, we simulate the static pricing strategy in Section~\ref{sec_fixed_budget} and study the distribution of finishing time. The experiment settings are as follows:
\begin{denselist}
\item
The total number of tasks $N = 200$, the total budget $B = 2500$ cents.
\item
The mapping function between task reward $c$ and task acceptance probability $p$ is still the same as Equation~\ref{eq_expt_pc}.
\item
Arrival-rates are retrieved from mturk-tracker as before.
\end{denselist}
Figure~\ref{fig_static_pricing} shows the simulation result. The average completion time is $23.2$ hours. However, any completion time between $18$ and $30$ hours is possible. Thus, the static pricing strategy does not try to guarantee any upper bound on the completion time but rather aims to minimize the completion time in expectation.

\begin{figure}[h]
\vspace{-5pt}
\centering
\includegraphics[width = 2.5in]{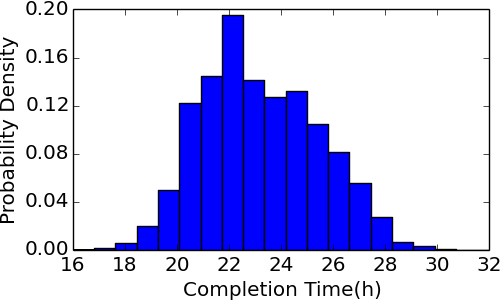}
\vspace{-5pt}
\caption{The simulated distribution of completion time}
\label{fig_static_pricing}
\vspace{-10pt}
\end{figure}

}

\subsection{Live Experiments on Mechanical Turk}\label{sec_exp_mturk}
In this section, we conduct experiments on Mechanical Turk to examine the effectiveness of our dynamic pricing strategy from Section~\ref{sec_fix_deadline} in practice. In Section~\ref{sec_exp_mturk_fixed} we first deploy the fixed pricing strategy on Mechanical Turk to collect data about worker arrival rate $\lambda(t)$ and task acceptance probability function $p(c)$. In Section~\ref{sec_exp_mturk_dynamic} we deploy our dynamic pricing strategy based on collected data, and experiment results are compared to the fixed pricing strategies. 
\techreporttext{In Section~\ref{exp_mturk_data_analysis}, we analyze the data collected from both experiments to provide some other interesting insights into workers' behaviors.}
Common experiment settings are listed below:
\begin{denselist}
\item 
We use an entity resolution task dataset from Joglekar et al.~\cite{confidence}.
Each task in the dataset consists of two photos (each with one athlete), and the worker is asked whether they contain the same person. 
In all experiments, we have 5,000 pairs of photos that we want workers to label.
\item
In all experiments, we post tasks on weekdays at 8 a.m. PST, with the deadline as 14 hours after start time (i.e., 10 p.m. PST). 
\item
The worker qualifications are: worker must have at least 90\% approval rate in history and live in United States.
\end{denselist}
Note that in Mechanical Turk, HITs (i.e., the unit of work in Mechanical Turk) with different price are grouped differently, even if they are issued by the same requester. Thus, workers looking for our specific HITs may not be able to know how many there are in total. So, in our experiments, we considered two options to vary price: (a) per HIT, keep the base price and number of tasks fixed, and vary the amount of bonus provided to the worker, or (b) per HIT, keep the base price fixed, and vary the number of tasks. We decided to go with the second option.
In our experiments, the price of each task group (i.e., HIT in Mechanical Turk) is fixed at \$0.02, and the price difference is expressed by the number of tasks (i.e., number of photo pairs to be labeled) in each HIT.

\subsubsection{Fixed Pricing Experiment}\label{sec_exp_mturk_fixed}
The fixed pricing experiment consists of five trials, where each HIT contains 10/20/30/40/50 tasks respectively. Given the total number of tasks in the trials is fixed at 5,000, the actual number of HITs posted to the marketplace is 500/250/167/125/100 respectively. In other words, in the five trials, the price for each task is implicitly \$0.002/0.001/0.00066/0.0005/0.0004 respectively. We stopped at 50 tasks per HIT, to limit worker fatigue.
Figure~\ref{fig_mturk_fixed_pricing} shows the number of HITs completed during the whole time period.

\begin{figure*}[t]
\centering
\vspace{-5pt}
\subfigure{\label{fig_mturk_fixed_pricing} \includegraphics[width = 2in]{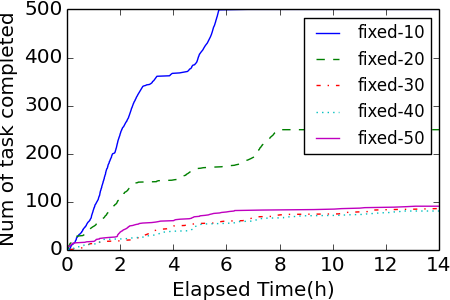}}
\subfigure{\label{fig_mturk_fixed_pricing_percentage} \includegraphics[width = 2in]{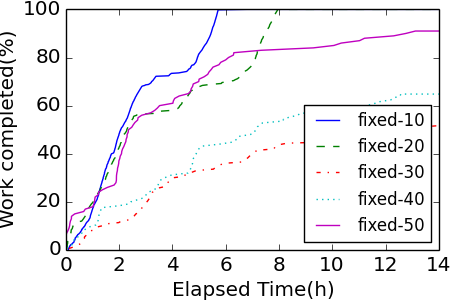}}
\subfigure{\label{fig_mturk_dynamic_pricing_percentage} \includegraphics[width = 2in]{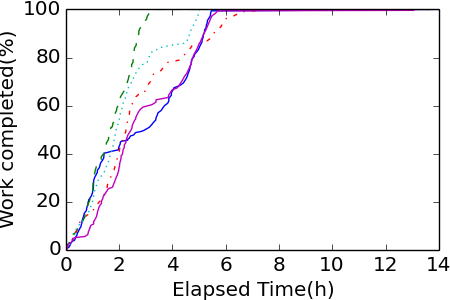}}
\vspace{-10pt}
\caption{Experiments on Mechanical Turk (a) The HIT completion rate for fixed pricing strategy (b) The percentage work completion rate for fixed pricing strategy (c) The percentage work completion rate for dynamic pricing strategy}
\label{fig_mturk}
\vspace{-15pt}
\end{figure*}


As can be seen from the figure, the HIT completion rate is positively correlated with the price of each task in general: for instance, when the elapsed time is 6 hours, the trial with 10 tasks per HIT has more than double the number of HITs completed than that with 20 tasks, and more than four times the number of HITs completed than that with 30, 40 or 50 tasks. When the number of tasks in each HIT is below 20, the task completion rate becomes high enough to have all tasks completed before the deadline (i.e., 14 hours). The task completion rates are very close for trials with grouping size 30/40/50, which can be explained by the small difference between unit task prices (\$0.00066 / \$0.0005 / \$0.0004).

However, the actual work completion rates (in terms of percentage of total work completed) are quite different when we take the \eat{question number} difference of number of tasks in each HIT into account, as shown in Figure~\ref{fig_mturk_fixed_pricing_percentage}.
Perhaps somewhat surprisingly, we see that after multiplying the number of tasks in each HIT to the quantities in Figure~\ref{fig_mturk_fixed_pricing}, the curve of the trial with grouping size $50$ becomes significantly higher than the curves of the trials with grouping size $30$ and $40$. This phenomenon suggest that grouping size per HIT has considerable effect on work completion rate: while workers choose HITs based on unit time wage, grouping more tasks into single HIT tends to force workers to stay on the same HIT for a longer time. (Note that in Mechanical Turk, workers do not earn any reward until they have completed all tasks in a HIT.)

\subsubsection{Dynamic Pricing Experiment}\label{sec_exp_mturk_dynamic}

To make experiment results comparable, the basic settings settings of the dynamic pricing experiment are the same as Section~\ref{sec_exp_mturk_fixed} (i.e., start time, deadline, total number of tasks), except that the grouping size is changed every hour based on our dynamic pricing strategy. The grouping size is chosen from 10/20/30/40/50, and the corresponding HIT acceptance rates are estimated from the fixed pricing experiment in the previous section. The worker arrival rates are estimated by averaging normalized worker arrival data in the five fixed pricing trials.


Figure~\ref{fig_mturk_dynamic_pricing_percentage} depicts work completion rate of the five trials (one on each day) in our experiment. As can be seen in the figure, the dynamic pricing strategy ends up completing all the tasks well before the deadline (6 hours instead of 14 hours). 
Furthermore, we find that the average total cost for the five trials is \$3.2, which is much less ($\approx 36\%$ less) than the total cost of \$5 for the fixed pricing strategy with grouping size $20$; in fact, note that the fixed pricing strategy with grouping size 20 had an elapsed time of 8, two hours more than the elapsed time of any of the trials for our strategies.

\papertext{In our extended technical report~\cite{treport}, we analyze the data collected via both experiments to shed light on two aspects of worker behavior: the accuracy of submitted answers and the number of tasks completed by each worker under different pricing settings. Our experimental results suggest that the quality of answers are reasonably good (with average accuracy $>90$\%) regardless of task price. While we refer the reader to the technical report for the details, Table~\ref{tbl_ave_acc_static_paper} and~\ref{tbl_ave_acc_dynamic_paper} depict the average accuracy for different group sizes for the fixed pricing experiment, and the average accuracy for different trials in the dynamic pricing experiment. Overall, there doesn't seem to be a distinct pattern governing how pricing affects accuracy, and we suspect the differences are all due to random variations.
Our results also show that the average number of tasks completed by each worker growing approximately exponentially with respect to task price. However, note that these results are preliminary, and that comprehensive experiments are necessary to draw further conclusions.}

\papertext{
	\begin{table}[h]
	\vspace{-5pt}
\centering
\scriptsize
\begin{tabular}{|c|c|c|c|c|c|}
\hline
\textbf{Group Size} & $10$ & $20$ & $30$ & $40$ & $50$\\
\hline \hline
\textbf{Average Accuracy} & $92.7$ & $90.4$ & $91.6$ & $90.0$ & $89.5$ \\
\hline 
\end{tabular}
	\vspace{-5pt}
\caption{The average accuracy of answers in the fixed pricing experiment}
	\vspace{-5pt}
	\label{tbl_ave_acc_static_paper}
\vspace{-10pt}
\end{table}
\begin{table}[h]
\centering
\scriptsize
\begin{tabular}{|c|c|c|c|c|c|}
\hline
\textbf{Trial} & $1$ & $2$ & $3$ & $4$ & $5$\\
\hline  \hline
\textbf{Overall Ave. Accuracy} & $90.7$ & $91.7$ & $88.2$ & $95.0$ & $90.9$ \\
\hline
\end{tabular}
	\vspace{-5pt}
\caption{The average accuracy of answers in dynamic pricing experiment}
	\vspace{-5pt}
\label{tbl_ave_acc_dynamic_paper}
\vspace{-10pt}
\end{table}	
}

\techreporttext{
\subsubsection{Analysis of Collected Data}\label{exp_mturk_data_analysis}
In this section, we further analyze the data collected from previous experiments to study the behavior of workers.
\begin{figure}[h]
\centering
\includegraphics[width = 2.5in]{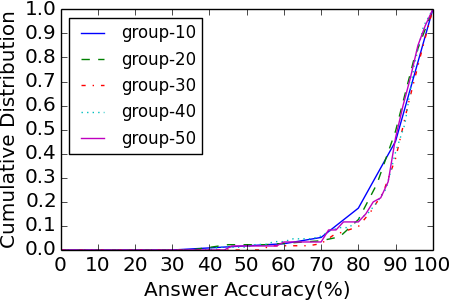}
\vspace{-5pt}
\caption{Answer quality under different prices in fixed pricing experiment}
\label{fig_mturk_static_quality}
\vspace{-10pt}
\end{figure}
\begin{figure}[h]
\centering
\includegraphics[width = 2.5in]{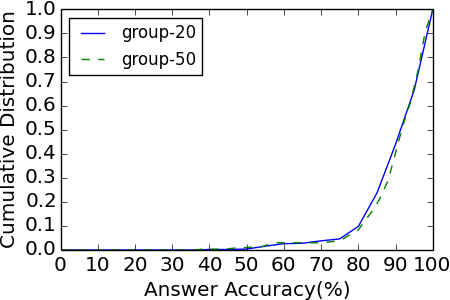}
\vspace{-5pt}
\caption{Answer quality under different prices in dynamic pricing experiment}
\label{fig_mturk_quality}
\vspace{-10pt}
\end{figure}

\begin{table}[h]
\centering
\scriptsize
\begin{tabular}{|c|c|c|c|c|c|}
\hline
\textbf{Group Size} & $10$ & $20$ & $30$ & $40$ & $50$\\
\hline \hline
\textbf{Average Accuracy} & $92.7$ & $90.4$ & $91.6$ & $90.0$ & $89.5$ \\
\hline 
\end{tabular}
\caption{The average accuracy of answers in the fixed pricing experiment}
\label{tbl_ave_acc_static}
\vspace{-10pt}
\end{table}
\begin{table}[h]
\centering
\scriptsize
\begin{tabular}{|c|c|c|c|c|c|}
\hline
\textbf{Trial} & $1$ & $2$ & $3$ & $4$ & $5$\\
\hline  \hline
\textbf{Ave. Accuracy w/ group size $20$} & $92.9$ & $89.1$ & $89.4$ & $94.9$ & $92.0$ \\
\hline 
\textbf{Ave. Accuracy w/ group size $50$} & $89.8$ & $92.8$ & $87.4$ & $95.2$ & $90.2$ \\
\hline
\textbf{Overall Ave. Accuracy} & $90.7$ & $91.7$ & $88.2$ & $95.0$ & $90.9$ \\
\hline
\end{tabular}
\caption{The average accuracy of answers for tasks with group size $20$ and $50$ in dynamic pricing experiment}
\label{tbl_ave_acc_dynamic}
\vspace{-10pt}
\end{table}


Figure~\ref{fig_mturk_static_quality} and Figure~\ref{fig_mturk_quality} depict the cumulative distribution of accuracy of worker's answers under different task price settings in fixed pricing and dynamic pricing experiments respectively. The two curves in the dynamic pricing experiment are the cumulative accuracy distribution of answers when the dynamic pricing strategy picks a grouping size of $20$ or $50$. We only plot these two curves for the dynamic pricing case, because the other grouping sizes are rarely used by the dynamic pricing strategy in our experiments.
Overall, we find that the curves (the five in Figure~\ref{fig_mturk_static_quality} and the two in Figure~\ref{fig_mturk_quality}) are all very similar, indicating that the pricing does not affect quality much.
Note that the group-$50$ curve (and the group-$40$ curve too, to some extent) in the fixed pricing plot appears ``jagged'', while the other group sizes have a smoother plot. This is probably because there are fewer tasks in that trial (recall that we are fixing the total number of questions, so when the grouping size increases, the total number of tasks decreases), and the number of possible accuracy values is larger (which means that curve will be less smooth when connecting points to draw the cumulative distribution curve). The average accuracy of answers in the two experiments are shown in Table~\ref{tbl_ave_acc_static} and Table~\ref{tbl_ave_acc_dynamic} respectively.

The experimental results show that the average accuracy of answers are all reasonably good (higher than or close to 90\% accuracy), and their differences are not statistically significant. This result suggests that, under our experimental conditions, pricing mainly affects whether workers choose to work on the HIT or not. If they decide to work on one of our HITs, the answers then provided are reasonably good. Studying the general correlation between task price and answer quality requires additional in-depth experiments, which are beyond the scope of this paper.
\begin{figure}[h]
\centering
\includegraphics[width = 2.5in]{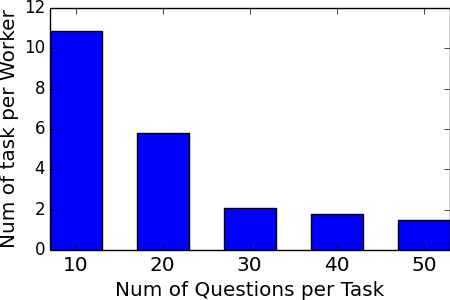}
\vspace{-5pt}
\caption{Average number of tasks completed by each worker}
\label{fig_mturk_work_length}
\vspace{-10pt}
\end{figure}
 
Figure~\ref{fig_mturk_work_length} shows the average number of HITs completed by each worker under different pricing settings in the fixed pricing experiments in Section~\ref{sec_exp_mturk_fixed}. As shown in the figure, with the lower task price, workers tend to leave after they completed one or two HITs. On the other hand, when task price is higher, some workers will tend to continuously work on the same kind of task. Note that the NHPP model in Faridani's work~\cite{faridani11} does not explicitly model this phenomenon. By incorporating this behavior into NHPP model, the worker-arrival rate could be predicted more accurately, and this could potentially improve the effectiveness of our dynamic pricing strategy.
}

\techreporttext{
\section{Discussion}\label{sec:discussion}

In this section, we discuss some possible generalizations of our pricing schemes, as well as some possible impact of our dynamic pricing strategy on worker behavior.

\subsubsection*{Multiple Task Types} In some cases, we may have multiple types of tasks that need to be completed by a certain deadline. For instance, we may have 100 categorization tasks, and 500 labeling tasks that all need to be completed at the same time. Multiple task types is easy to incorporate into our model. We simply represent the state as a vector $(n_1, n_2, \ldots, n_k, t)$,
where $n_i$ represents the number of tasks of type $i$. The resulting objectives, relationships, and dynamic programming-based optimization algorithms are similar.

\subsubsection*{Incorporating Quality Control for Filtering Tasks} 

In \cite{crowdscreen}, we describe an MDP based technique to optimize for cost and accuracy, specifically for filtering or rating tasks. At a high level, our algorithm from \cite{crowdscreen}, given an accuracy threshold overall, generates a per-item quality-control strategy guaranteeing specified accuracy with the minimum total number of questions in expectation (ignoring pricing per question). A quality-control strategy is represented using a collection of points $(x, y)$, where $x$ is the number of No answers for that item and $y$ is the number of Yes answers, and decisions, and for each point there is a decision associated with that point, either continue asking questions for that task, or stop and return PASS/FAIL --- representing the fact that the task either satisfied the filtering predicate or did not. Now, we can generalize this approach as well as the approach described in this paper to give a solution optimized for cost, latency, and accuracy.

Consider the problem where we once again have $N$ filtering tasks that we need to complete, and our goal is to minimize expected cost, while ensuring that accuracy is within threshold, and that tasks are completed by a certain deadline. As a first step, we generate the per-task quality-control strategy using algorithms from \cite{crowdscreen}, ensuring the minimum number of questions are used, while guaranteeing that the accuracy is within threshold. Let this quality-control strategy have $k$ points $(x, y)$, corresponding to $k$ different combinations of the number of \# of Y and N answers for that task---we then use the current technique described in this paper for the state space represented by:
\begin{quote}
(\# of undecided tasks at pt.~1, \# of undecided tasks at pt.~2, $\ldots$, \# of undecided tasks at pt.~$k$, remaining time) 
\end{quote}
instead of
\begin{quote}
(\# of tasks, remaining time)
\end{quote}
and then optimize for the best pricing technique.
Note that whenever an task is completed (i.e., the quality-control strategy deems that we don't need any additional answers for that task), it is removed from the set of undecided tasks, and no longer counts towards the pricing state space described above. 

The DP algorithm to determine the optimal strategy overall is similar to the one described in Section~\ref{sec_fix_deadline}, except that it is computed over this new state space.
We let $P$ denote the vector $(n_1, \ldots, n_k)$, representing the number of tasks in each of the $k$ points of the quality-control strategy, and $P'$ denote the vector corresponding to the number of tasks in each of the $k$ points (in the quality-control strategy) that the tasks transitioned to by the next time interval. we let $s$ represent the number of additional answers between the two points (that is, the number of additional Yes/No answers).
Then, the probabilities are as follows:
\begin{align*}
\mathbf{Pr}\{(P, t) \rightarrow (P', t + 1)|c_{P,t}\} & = \\ 
\mathbf{Pr}(P \rightarrow P' | s, P) & \cdot \mathbf{Pois}(s|\lambda = \lambda_t p(c_{P,t}))
\end{align*}
The latter term is as Equation~\ref{eq:probtrans}, while the first term can be computed using probability machinery from \cite{crowdscreen}. Overall, the complexity is: $O(N^{2k} N_T C)$, which can be large if $k$ is large. (Note that when $k = 1$, we default to the setup from Section~\ref{sec_fix_deadline}). 
Thus, our problem is fundamentally challenging if $k$ is large.
Typically, though, $k$ is often as small as $9$, say when a small majority vote quality-control strategy is used.

Recognizing the fact that the problem may be intractable when $k$ is large, we present next some approximate techniques for this problem. It remains to be seen which of these techniques would be better suited for the problem and lead to better approximations. While the first technique has guarantees (but only asymptotically) the second technique does not have any guarantees, but is more tractable and easier to understand:
\squishlist
\item {\bf Representing Using Posterior Probabilities:} For the cases when $k$ is large, we can approximate the quality-control strategy generation process, as described in \cite{rate}, where we map points in $(x, y)$ to the segments in the real line: 
$$[0, a) [a, 2a) \ldots [1-a, 1]$$
where the interval $[a, 2a)$ represents the fact that the posterior probability of the item satisfying the filter is between $a$ and $2a$.
Thus, given a point $(x, y)$ will map to a segment $[ia, (i+1)a)$ along this real line, where 
$$ia \leq \textbf{Pr}[\text{item is a \ } 1| (x, y)] < (i + 1) a$$
We then regard every point $(x, y)$ that maps to such an interval $[ia, (i + 1) a)$ as having posterior probability $ia + a/2$.
The algorithm stays unchanged, except that the $k$ points are now represented approximately by these $1/a$ intervals;
and the complexity is now $O(N^{2/a}N_TC)$.
As shown in \cite{rate}, as $a \rightarrow 0$, the optimal strategy in this representation (with intervals) tends to the optimal strategy in the old representation (with points) asymptotically.
This argument follows from standard arguments for discretizing continuous state markov decision processes, e.g., 
\cite{bertsekas1995dynamic}.
\item {\bf Keeping Quality Control Separate from Pricing Optimization:} The second approximation technique  treats quality control as an orthogonal problem. Once we compute the quality-control strategy, for each point $(x, y)$ in the quality-control strategy from \cite{crowdscreen}, we can compute the worst case additional number of questions. For instance, if the strategy extends all the way to $x + y = 5$, then the worst case additional number of questions at $(0, 0)$ is $5$. On the other hand, the worst case additional number of questions at $(2, 1)$ may be just 1 if both $(3, 1)$ and $(2, 2)$----which are the two points reachable from $(2, 1)$ on getting an additional answer---are end states where a decision of PASS/FAIL is made. 
Then, we can apply our technique from Section~\ref{sec_fix_deadline} to the problem with $N' = N \times \alpha$, where
$\alpha$ is the worst case additional number of questions from the origin.
Now, we have a strategy designed from $(0, 0)$ to $(N' = N \alpha, T)$ where $N'$ represents the total worst case number of questions across all tasks.
We then run the strategy as before, while implicitly moving each task on the quality control strategy as well, and having that influence the $N'$ (i.e., the total worst case number of questions across all tasks) corresponding to where the strategy is currently at.
That is:
$$N' = \sum_{\text{all tasks \ }i}{\text{worst case additional questions at \ } P(i)}$$
where $P(i)$ denotes the point on the quality control strategy that the task is at.
We explain this using an example. Let there be 10 tasks, and let the quality-control strategy we desire to use be a majority vote strategy with 3 questions (i.e., ask 3 questions and take the majority).
Then, the worst case number of questions at point $(0, 0)$ in the quality-control strategy will be $3$.
So we will begin the strategy at $(10 \times 3 = 30, 0)$. 
After some time, let 5 tasks be at $(1, 1)$, while 2 reach $(2, 0)$ and 3 reach $(0, 2)$.
In the first case, the worst case additional number of questions is 1, while the other two tasks have worst case additional number of questions as 0.
Thus, overall, we are now at $(5 \times 1 + 2 \times 0 + 2 \times 0 = 5, T_i)$, where $T_i$ is the current time.
The reason we use the worst case additional number of questions is to be conservative in order to meet the deadline on time, at potentially additional cost. We could instead use the expected additional number of questions, but we may end up not meeting the deadline.
The complexity of the DP algorithm for this technique is very reasonable: $O((Nk)^2 N_T T)$. (Note that the worst case number of questions from the origin can at most be $k$, but could be much smaller.)
\squishend

If we had a prior distribution on difficulty, we can easily take that into account in the quality-control strategy, as described in \cite{rate}.

\subsubsection*{Optimizing Tradeoff between Deadline and Budget} 


In some scenarios, we may have neither a fixed deadline or a fixed budget, and we may want to achieve some optimal tradeoff between the two. We now focus on optimizing a linear combination of expected cost and time. 
Thus, our objective is now:
$$Q = \mathbb{E}(\textbf{cost}) + \alpha\mathbb{E}(\textbf{latency})$$
We consider two scenarios: the first, which makes more assumptions, and a more general scenario next. The first scenario will act as a ``building block'' for the second.

\stitle{Fixed Rate:}
We first focus on optimizing the objective under the assumption that the rate at which workers appear in the marketplace is fixed at $\lambda$, i.e., $\lambda(t) = \lambda, \forall t$. 
(This assumption is a bit more drastic than the assumption in Section~\ref{sec_fixed_budget}, where we assumed that the rate is not fixed but is constant over a long period.)
Given that we are discretizing time units as multiples of 1, the rate at which workers appear in the marketplace is the same as the expected number of workers who appear in the marketplace in a unit time interval. 

\begin{figure}[ht!]
\centering
\includegraphics[width = 2.5in]{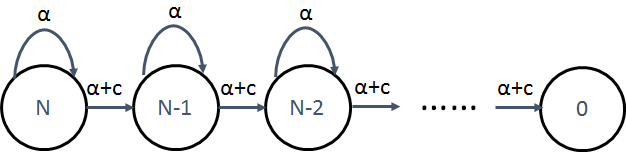}
\vspace{-5pt}
\caption{State transition diagram}
\label{fig_linear_state_transition}
\vspace{-10pt}
\end{figure}
Turns out, under such a scenario, we do not need to record cost or time, since we do not have a deadline, and since the amount of time elapsed or cost consumed until one gets to a given state is ``sunk cost/time''.
Thus, the states are simply recorded using $(n)$, where $n$ is the number of outstanding tasks. We let $c_n$ denote our price for all tasks at time $n$. 
Our time interval discretization will be set to be small enough that the likelihood of two tasks being performed within a time interval is nearly zero.
We depict the set of states and transitions in Figure~\ref{fig_linear_state_transition}. From any state, we either stay in that state (if no tasks are picked up), or move to the neighboring state on the right (if one task is picked up --- recall that since our time granularity is set to be as small so that we do not ever end up having more than one task picked). The costs for transitions are labeled on the edges, and are described more in the equations below:
\begin{align*}
\mathbf{Pr}\{(n) \rightarrow (n - 1)|c_{n}\} & =  e^{-\lambda p(c_{n})} \lambda p(c_{n}) \\
\mathbf{cost}\{(n) \rightarrow (n - 1)|c_{n}\} & =  c_n + \alpha \\ 
\mathbf{cost}\{(n) \rightarrow (n)|c_n\} & = \alpha  
\end{align*}
These equations are simply counterparts to the equations~\ref{eq:probtrans} and~\ref{eq:costtrans} in Section~\ref{sec_fix_deadline}, with the additional restriction that the probability of transitioning from $n$ to $n-s$ where $s > 1$ is 0, and the fact that the cost of transitioning back to the same state is $\alpha$, i.e., $\alpha \times \text{latency}$, which is 1, and the cost of transitioning to a neighboring state is $c_n + \alpha \times \text{latency} = c_n + \alpha$. 
Now, following the dynamic programming recipe from Section~\ref{sec_fix_deadline}, 
\begin{align*}
\mathbf{Opt}(n) = \min_c  [\mathbf{Opt}(n - 1) + c + \alpha] & \times  \mathbf{Pr}\{(n) \rightarrow (n - 1)|c\}  \\
						+  \ \ \ [ \mathbf{Opt}(n) + \alpha] & \times   \mathbf{Pr}\{(n) \rightarrow (n)|c\} 
\end{align*}
\noindent Here, given the price $c$ that is set at state $n$, either one task is completed by the next time interval, or no task is completed, which accounts for the two quantities within the minimization. In the latter case, we stay at the same state. Note here that we need to, for each $c$, solve for $\mathbf{Opt}(n)$, and then pick the smallest value. That is the best price for state $(n)$. The complexity of this procedure is dependent, once again, on the number of price choices $C$: the complexity is simply: $O(N C)$.

\stitle{Relaxing the Linearity Assumption:}
The above techniques can be generalized to the more realistic assumption that we made in Section~\ref{sec_fixed_budget}, that the expected total latency $T$ is linearly correlated with worker arrival quantity $W$ (See Section~\ref{sec_linear_assumption} for the justification of this):
$$ \mathbb{E}[T|W] = \frac{W}{\bar \lambda} $$
where $\bar \lambda$ is the average arrival rate of workers. Now it follows that:
$$ \mathbb{E}[T] = \frac{1}{\bar \lambda} \mathbb{E}[W] $$
Substituting the above equation into our objective function, we get:
\begin{align*}
Q & = \mathbb{E}(\textbf{cost}) + \alpha\mathbb{E}(\textbf{latency}) \\ & = \mathbb{E}(\textbf{cost}) + \alpha \frac{1}{\bar \lambda} \mathbb{E}(\textbf{worker arrival}) 
\end{align*}
Optimizing this new objective function is very similar to optimizing the original objective function: the state space is still the same, except that transitions between states are slightly different, here we have:
\begin{align*}
\mathbf{Pr}\{(n) \rightarrow (n - 1)|c_{n}\} & =  p(c_{n}) \\
\mathbf{cost}\{(n) \rightarrow (n - 1)|c_{n}\} & =  c_n + \alpha \frac{1}{\bar \lambda} \\
\mathbf{cost}\{(n) \rightarrow (n)|c_n\} & = \alpha \frac{1}{\bar \lambda} 
\end{align*} 
where each transition represents one single worker arrival event, unlike the previous scenario, where transitions happened at each time interval. Here, once a worker arrives, whether or not they choose to work on our task determines if we transition to the neighboring state, or we stay at the same state. We can see that the state transition diagram is the same as before, except a few coefficients are different now. Therefore, the same dynamic programming technique can also be applied here to find the optimal solution, and the complexity stays the same, i.e., $O(NC)$.

\subsubsection*{Long-term Impact on Worker Behavior} As in any marketplace / game theoretic scenario, some workers may learn to ``game'' the system as well as our dynamic pricing algorithm. This is inevitable. In practice, however, we expect that as long as the majority of workers are part-time workers (which is certainly true in current marketplaces), they are not likely to witness our algorithm in action and evolve their decisions to take advantage of it. Furthermore, even if many workers know about our dynamic pricing algorithm and wish to take advantage of the system, as the price increases, some workers may decide to work on all the remaining tasks at that price, leaving other workers to not have any work (and hence rethink their strategy). As long as the workers in the marketplace are not cooperating with each other, we expect the system to achieve certain equilibrium in the end. Lastly, we expect our pricing techniques to be updated once in a while to reflect the current state of the marketplace---for instance, if workers no longer pick up \$0.1 tasks, we may want to offer our minimum price as \$0.2.
}


\section{Related Work}\label{sec:related}

The prior work related to ours can be placed in a few categories; we describe each of them in turn:

\smallskip
\noindent {\bf Pricing Schemes:}
Faridani~\cite{faridani11} develop models for marketplace dynamics that we leverage in this paper. They also develop static pricing strategies that we compare against. To the best of our knowledge, there has been no work on optimizing price apart from \cite{faridani11}.

\smallskip
\noindent {\bf Control Theory:}  Recent work has leveraged decision theory for improving cost and quality in simple crowdsourcing workflows, typically using POMDPs (Partially Observable MDPs): Dan Weld's group has designed strategies to dynamically choose the best decision to make at any step in the workflow (refine, improve, vote, or stop), and also to dynamically switch between workflows to improve the overall ``utility''~\cite{clowder,DBLP:conf/uai/LinMW12,DBLP:conf/aaai/LinMW12,workflow}. Kamar et al.~\cite{DBLP:conf/aamas/KamarHH12} use POMDPs to study how to best utilize participation in voluntary crowdsourcing systems, specifically, Galaxy Zoo, an astronomical data set verified by human workers. The papers mentioned above do not provide theoretical guarantees. Our prior work also uses decision theory for getting guarantees on cost and accuracy for filtering~\cite{rate, crowdscreen}. None of these prior papers study the problem of determining optimal pricing for tasks over time: all of them assume that each task has a fixed price or reward, and optimize the set of tasks to meet accuracy guarantees.

\smallskip
\noindent {\bf Crowd Algorithms:} There has been a lot of recent activity centered around designing data processing algorithms where the unit operations are performed by human workers, such as filtering~\cite{crowdscreen}, sorting and joins~\cite{so-who-won, markus-sorts-joins}, top-$k$~\cite{top-k}, deduplication and clustering~\cite{DBLP:journals/pvldb/WangKFF12, crowdclustering}
 and categorization~\cite{humangs}. None 
of these papers explore the problem of pricing tasks to complete on time.

Of these papers, just categorization~\cite{humangs} and filtering~\cite{crowdscreen, rate} consider the aspect of latency, and there too, they use number of round-trips as a proxy for latency rather than the true elapsed time.


\smallskip
\noindent {\bf Error Estimation:} There has been significant work on simultaneous estimation of answers to tasks and errors of workers using the EM algorithm or other local optimization techniques. There have been a number of papers studying increasingly expressive models for this problem, including difficulty of tasks and worker expertise~\cite{whitehill-accuracy, raykar-whom-to-trust, LPI}, adversarial behavior~\cite{DBLP:journals/jmlr/RaykarY12}, and online evaluation of workers~\cite{online-crowdsourcing, swiftly, cdas}. 
There has also been work on choosing workers for evaluating different items so as to reduce overall error rate~\cite{get-another-label, donmez-learning-accuracy}. 
Recent work has also tried to obtain theoretical guarantees for both worker error estimates as well as correct labels for items~\cite{DDKR, GKM, KOS, TR}. 
Our work on pricing tasks is orthogonal to this line of work, and can be combined with any of these schemes to better price
a batch of tasks to complete by a given deadline.

\smallskip
\noindent {\bf Applications:} There are a number of useful applications of crowdsourcing, such as sentiment analysis~\cite{rion-snow}, identifying spam~\cite{savage-spam}, determining search relevance~\cite{alonso-relevance-judgements}, and translation~\cite{zaidan-translation}. 

\section{Conclusions}\label{sec_conclusion}

In this paper, we developed algorithms to optimally set and vary the price for human computation tasks in a crowdsourcing marketplace to meet latency and cost constraints. 
For a monetary budget scenario, we demonstrated that static pricing strategies lead to optimal completion times, and developed efficient algorithms to find near-optimal pricing strategies. For a fixed deadline scenario, we demonstrated that our techniques, based on MDPs, outperform fixed pricing strategies by up to 30\% on simulations based on real-world crowdsourcing marketplace data and live experiments, and are more robust to errors in estimates of marketplace parameters and predictions of future trends. Our techniques can be profitably employed in scenarios demanding the repeated use of crowdsourcing on a large scale, wherein the cost savings will be massive. 


%
{\scriptsize
\bibliographystyle{abbrv}
\bibliography{sigproc,crowdbib}  
}
%
%
\end{document}